



\documentclass[sigconf]{aamas} 


\usepackage{algorithm}
\usepackage[noend]{algpseudocode}
\usepackage{mathtools}
\usepackage{amsmath}
\usepackage{tikz}
\usetikzlibrary{calc}
\usepackage{todonotes}
\usepackage{dsfont}
\usepackage{enumitem}
\usepackage{tikz}
\usepackage{amsthm}
\usepackage{thm-restate}
\usepackage{booktabs}
\usepackage{multirow}
\usepackage{newfloat}
\usepackage{wrapfig}
\usepackage{dsfont}
\usepackage{comment}
\usepackage{multirow}
\usepackage{graphicx}
\usepackage{mleftright}

\title[Bayesian Persuasion Meets Mechanism Design: Going Beyond Intractability with Type Reporting]{Bayesian Persuasion Meets Mechanism Design:\\Going Beyond Intractability with Type Reporting}

\usepackage{balance} 



\acmConference[ArXiV]{ArXiV}{February}{2022} 
\copyrightyear{}
\acmDOI{}
\acmPrice{}
\acmISBN{}



\acmSubmissionID{596}


\newcommand{\NPHard}{$\mathsf{NP}$-hard}
\newcommand{\NP}{$\mathsf{NP}$}

\newcommand{\muvec}{\boldsymbol{\mu}}

\newcommand{\avec}{\textbf{a}}

\newcommand{\svec}{\mathbf{s}}
\newcommand{\sset}{\mathcal{S}}

\newcommand{\defeq}{\vcentcolon=}

\newcommand{\kvec}{{\boldsymbol{k}}}

\DeclareMathOperator*{\argmax}{arg\,max}

\newcommand{\rec}{\mathcal{R}}
\newcommand{\A}{\mathcal{A}}

\newcommand{\brset}{\mathcal{B}}

\newcommand{\nAct}{\ell}
\newcommand{\nType}{m}
\newcommand{\nState}{d}

%
%

\newcommand{\pvec}{{\boldsymbol{\xi}}}
\newcommand{\p}{\xi}
\newcommand{\pset}{\Xi}

%
%
\newcommand{\K}{\mathcal{K}}

%
%

\newtheorem{theorem}{Theorem}

\allowdisplaybreaks


\author{Matteo Castiglioni}
\affiliation{
  \institution{Politecnico di Milano}
  \city{Milan}
  \country{Italy}}
\email{matteo.castiglioni@polimi.it}
\author{Alberto Marchesi}
\affiliation{
	\institution{Politecnico di Milano}
	\city{Milan}
	\country{Italy}}
\email{alberto.marchesi@polimi.it}
\author{Nicola Gatti}
\affiliation{
	\institution{Politecnico di Milano}
	\city{Milan}
	\country{Italy}}
\email{nicola.gatti@polimi.it}


\begin{abstract}
	Bayesian persuasion studies how an informed \emph{sender} should partially disclose information so as to influence the behavior of self-interested \emph{receivers}.
	In the last years, a growing attention has been devoted to relaxing the assumption that the sender perfectly knows receiver's payoffs.
	The first crucial step towards such an achievement is to study settings where each receiver's payoffs depend on their unknown \emph{type}, which is randomly determined by a known finite-supported probability distribution.
	This begets considerable computational challenges, as computing a sender-optimal signaling scheme is inapproximable up to within any constant factor, even in basic settings with a single receiver.
	In this work, we circumvent this issue by leveraging ideas from mechanism design. 
	In particular, we introduce a \emph{type reporting} step in which the receiver is asked to report their type to the sender, after the latter has committed to a \emph{menu} defining a signaling scheme for each possible receiver's type.
	Surprisingly, we prove that, with a single receiver, the addition of this type reporting stage makes the sender's computational problem tractable.
	Then, we extend our Bayesian persuasion framework with type reporting to settings with multiple receivers, focusing on the widely-studied case of \emph{no inter-agent externalities} and binary actions.
	In such setting, we show that it is possible to find a sender-optimal solution in polynomial-time by means of the ellipsoid method, given access to a suitable polynomial-time separation oracle.
	This can be implemented for supermodular and anonymous sender's utility functions.
	As for the case of submodular sender's utility functions, we first approximately cast the sender's problem into a linearly-constrained mathematical program whose objective function is the multi-linear extension of the sender's utility.
	Then, we show how to find in polynomial-time an approximate solution to the program by means of a continuous greedy algorithm.
	This provides a $\left( 1 -\frac{1}{e}\right)$-approximation to the problem, which is tight.
\end{abstract}



\keywords{}


         
\newcommand{\BibTeX}{\rm B\kern-.05em{\sc i\kern-.025em b}\kern-.08em\TeX}


\begin{document}




\maketitle 

\section{Introduction}

\emph{Bayesian persuasion}~\citep{kamenica2011bayesian} studies the problem faced by an informed agent (the \emph{sender}) trying to influence the behavior of other self-interested agents (the \emph{receivers}) via the partial disclosure of payoff-relevant information.
Agents' payoffs are determined by the actions played by the receivers and by an exogenous parameter represented as a \emph{state of nature}, which is drawn by a known prior probability distribution and observed by the sender only.
The sender commits to a public, randomized information-disclosure policy, which is customarily called \emph{signaling scheme}.
In particular, it defines how the sender should send private signals to the receivers, essentially deciding \emph{``who gets to know what"}.
These kinds of problems are ubiquitous in applications such as auctions and online advertising~\cite{bro2012send,emek2014signaling,badanidiyuru2018targeting,bacchiocchi2022public,castiglioni2022signaling}, voting~\citep{alonso2016persuading,cheng2015mixture,castiglioni2019persuading,castiglioni2020persuading,castiglioni2020public}, traffic routing~\citep{vasserman2015implementing,bhaskar2016hardness,castiglioni2020signaling}, recommendation systems~\cite{mansour2016bayesian}, security~\citep{rabinovich2015information,xu2016signaling}, and product marketing~\citep{babichenko2017algorithmic,candogan2019persuasion}.

In the classical Bayesian persuasion model by~\citet{kamenica2011bayesian}, the sender perfectly knows the payoffs of the receivers. This assumption is unreasonable in practice. Recently, some works tried to relax such an assumption. \citet{Castiglioni2020online} do that by framing the problem in an online learning framework, focusing on the single-receiver setting. They study the problem in which the sender repeatedly faces a receiver whose type during each iteration is unknown and selected beforehand by an adversary.
They design no-regret learning algorithms under full-information and partial-information feedback.
However, these algorithms require exponential running time, since even the offline problem in which the receiver's type is randomly selected according to a known finite-supported probability distribution is \NPHard\ to approximate up to within any constant factor.
\citet{castiglioniMulti2021} consider the problem with multiple receivers,
focusing on the classical model with binary actions and no-inter-agent-externalities~\citep{babichenko2017algorithmic,arieli2019private}, where each receiver's payoffs depend only on their action and the state of nature.
In the restricted setting in which each receiver has a \emph{constant} number of possible types, they show that the problem is intractable for supermodular and anonymous sender's utilities and design a no-$\left( 1-\frac{1}{e} \right)$-regret polynomial-time algorithm for submodular sender's utilities.
Let us remark that \citet{Castiglioni2020online} and \citet{castiglioniMulti2021} show that, in their respective settings, the design of polynomial-time no-regret algorithms is impossible due to the \NP-hardness of the underlining offline optimization problems in which the distribution over types is known.
Hence, the design of \emph{efficient} algorithms for the offline problem is the bottleneck to the design of efficient online learning algorithms.

In this work, we show how to circumvent this problem by leveraging ideas from mechanism design.
For the single-receiver setting, we introduce a \emph{type reporting} step in which the receiver is asked to report their type to the sender, after the latter has committed to a \emph{menu} defining a signaling scheme for each possible receiver's type.
Moreover, we extend the framework to accommodate multiple receivers with binary actions.
In such setting, we take advantage of the \emph{no-inter-agent-externalities} assumption to design a type reporting step that is independent among the receivers.
In particular, we introduce a type reporting step for each receiver, in which they are asked to report their type to the sender, after the latter has committed to a menu defining a \emph{marginal} signaling scheme for each possible receiver's type.
Then, the sender commits to a signaling scheme that is consistent with all the marginal signaling schemes.
By the no-inter-agent-externalities assumption, receivers' payoffs do \emph{not} depend on such a signaling scheme, and, thus, each receiver's decision problem in the type reporting step is well defined.

\subsection{Original Contribution}

In this work, we show that the introduction of a type reporting step makes the sender computational problem tractable.

For the single-receiver case, our main result is to show the existence of an optimal menu of \emph{direct} and \emph{persuasive} signaling schemes.
In the classical model in which the sender perfectly knows the receiver's payoffs, a signaling scheme is direct if signals represent action recommendations, while it is persuasive if the receiver is incentivized to follow recommendations.
We extend this definition to menus of signaling schemes. In particular, a menu is direct if the signals used by all the signaling schemes are action recommendations, whereas it is persuasive if the receiver has an incentive to follow action recommendations when they reported their true type.
By using this result, an optimal menu of signaling schemes can be computed efficiently by a \emph{linear program} (LP) of polynomial size.

In the multi-receiver setting, we focus on classes of sender's utility functions that are commonly studied in the literature, namely \emph{supermodular}, \emph{submodular}, and \emph{anonymous} functions~\citep{dughmi2017algorithmic,babichenko2017algorithmic,arieli2019private,Xu2020tractability}.
As in the single-receiver case, we show that there always exists an optimal sender's strategy using a menu of direct and persuasive marginal signaling schemes for each receiver.
This allows us to show that an optimal sender's strategy can be computed by solving an LP with polynomially-many constraints and exponentially-many variables.
This is possible in polynomial time by means of the \emph{ellipsoid method}, given access to a suitable polynomial-time separation oracle.
Such an oracle can be implemented for supermodular and anonymous sender's utility functions.
In the submodular case, the problem cannot be approximated within any factor better that $ 1-\frac{1}{e} $, since our problem generalizes the one without types, which is \NPHard\ to approximate up to within any factor better than $ 1-\frac{1}{e} $~\citep{babichenko2017algorithmic}.
However, we provide a polynomial-time algorithm that provides a tight $ \left( 1-\frac{1}{e} \right)$-approximation.
To do so, we show how to build a linearly-constrained mathematical program whose objective is the \emph{multi-linear extension} of the sender’s utility, having optimal value arbitrary close to that of an optimal sender's strategy.
Moreover, we show that, from a solution to this program, we can recover in polynomial time a sender's strategy having in expectation almost the same utility as the optimal value of the program.
Finally, we show how to find in polynomial time an approximate solution
to the program by means of a \emph{continuous greedy} algorithm.
This provides a $ \left( 1-\frac{1}{e} \right)$-approximation to the problem, which is tight.

\subsection{Related Works}

Most of the computational works on Bayesian persuasion study models in which the sender knowns the receiver's utility function exactly.
\citet{dughmi2016algorithmic} initiate these studies with the single-receiver case, while \citet{arieli2019private} extend their work to multiple receivers without inter-agent externalities, with a  focus on private signaling.
In particular, they focus on settings with binary actions for the receivers and a binary space of states of nature. They provide a characterization of an optimal signaling scheme in the case of supermodular, anonymous submodular, and super-majority sender's utility functions.
\citet{arieli2019private} extend this latter work by providing tight $\left(1-\frac{1}{e} \right)$-approximate signaling schemes for monotone submodular sender's utilities and showing that an optimal private signaling scheme for anonymous utility functions can be found efficiently.
\citet{dughmi2017algorithmic} generalize the previous model to settings with an arbitrary number of states of nature.
There are also some works focusing on public signaling with no inter-agent externalities, see, \emph{e.g.}, \citep{dughmi2017algorithmic}~and~\citep{Xu2020tractability}.

A recent line of research relaxed the assumption that the sender perfectly knows the receivers' utilities. \citet{Castiglioni2020online} and \citet{castiglioniMulti2021} study online problems with a single receiver and multiple receivers, respectively.
\citet{babichenko2021regret} study a game with a single receiver and binary actions in which the sender does not know the receiver utility, focusing on the problem of designing a signaling scheme that performs well for any possible receiver's utility function.
\citet{zu2021Learning} relax the perfect knowledge assumption, assuming that the sender and the receiver do not know the prior distribution over the states of nature. They study the problem of computing a sequence of persuasive signaling schemes that achieve small regret with respect to an optimal signaling scheme with knowledge of the prior distribution.

Our problem is also related to \emph{automated mechanism design}~\citep{conitzer2002complexity,guo2010computationally,vorobeychik2006empirical}.
The closest to our work is \cite{conitzer2003automated}, which studies a mechanism design problem between a mechanism designer and an agent.
The agent has a finite number of types and both the agent and the mechanism designer have a utility function that depends on the agent's type and on an outcome that the designer chooses from a finite set.
Moreover, the mechanism designer can commit to a menu specifying an outcome for each reported type.
The mechanism designer knows the receiver's probability distribution over types and their goal is to design an incentive compatible menu in order to maximize their utility.
The authors show that it is \NPHard\ to design an optimal menu, while if the mechanism is allowed to use randomization the problem can be solved in polynomial time.


\section{Formal Model}\label{sec:preliminaries}

We formally introduce the Bayesian persuasion framework with type reporting that we study in the rest of this work.
In particular, in Subsection~\ref{sec:single}, we describe the model with a single receiver, while in Subsection~\ref{sec:multi} we extend it to multi-receiver settings.

\subsection{Model with a Single Receiver} \label{sec:single}

The receiver has a finite set $A \defeq \{a_i\}_{i=1}^\nAct$ of $\nAct$ available actions and a type chosen from a finite set $K \defeq \{k_i\}_{i=1}^{\nType}$ of $\nType$ possible types.
For each type $k\in K$, the receiver's payoff function is $u^k: A \times\Theta\to [0,1]$, where $\Theta\defeq\{\theta_i\}_{i=1}^\nState$ is a finite set of $d$ states of nature.
We denote by $u_\theta^k(a)\in [0,1]$ the payoff obtained by the receiver of type $k \in K$ when the state of nature is $\theta \in \Theta$ and they play action $a \in A$.
%
The sender's payoffs are described by the functions $u^\mathsf{s}_\theta: A\to [0,1]$ for $\theta \in \Theta$.
%
As it is customary in Bayesian persuasion, we assume that the state of nature is drawn from a common prior distribution $\muvec\in\textnormal{int}(\Delta_\Theta)$, which is explicitly known to both the sender and the receiver.\footnote{$\textnormal{int}(X)$ is the {\em interior} of set $X$ and $\Delta_X$ is the set of all probability distributions over $X$. Vectors are highlighted in bold. For any vector $\mathbf{x}$, the value of its $i$-th component is $x_i$.}
The sender commits to a {\em signaling scheme} $\phi$, which is a randomized mapping from states of nature to {\em signals} for the receiver.
Formally, $\phi:\Theta \to \Delta_S$, where $S$ is a set of available signals.
%
For convenience, we let $\phi_\theta $ be the probability distribution employed by the sender to draw signals when the state of nature is $\theta \in \Theta$ and we denote by $\phi_\theta(s)$ the probability of sending signal $s \in S$.
%
Moreover, we slightly abuse the notation and use $\phi$ to also denote the probability distribution over signals induced by the signaling scheme $\phi$ and the prior distribution $\muvec$.

In the classical Bayesian persuasion framework by~\citet{kamenica2011bayesian} (without type reporting), the interaction between the sender and the receiver goes on as follows:
(i) the sender commits to a signaling scheme $\phi$ and the receiver is informed about it;
(ii) the sender observes the realized state of nature $\theta \sim \muvec$;
(iii) the sender draws a signal $s \in S$ according to $\phi_\theta$ and communicates it to the receiver;
(iv) the receiver observes $s $ and rationally updates their prior belief over $\Theta$ according to the {\em Bayes} rule; (v) the receiver selects an action maximizing their expected utility.

In step~(iv), after observing a signal $s\in S$, the receiver infers a posterior belief $\pvec^s\in\Delta_\Theta$ over the states of nature such that the component of $\pvec^s$ corresponding to state $\theta \in \Theta$ is:\footnote{We omit the dependency of $\pvec^s$ from $\phi$ as the signaling scheme that is actually used to compute the posterior will be clear from context. Moreover, for the ease of presentation, when we use notation $\text{Pr} \left\{ \cdot \right\}$ we assume that the set $S$ is finite, so that $\text{Pr} \left\{ \cdot \right\}$ is well defined. The notation can be easily generalized to the case of infinite sets $S$.}
\begin{equation}\label{eq:posterior}
\p^s_\theta\defeq \frac{\mu_\theta \Pr_{s'\sim \phi_\theta} \left\{ s'=s \right\} }{\Pr_{s' \sim \phi} \left\{ s'=s \right\} }.
\end{equation}
For the ease of notation, we let $\pset\defeq \Delta_\Theta$ be the set of receiver's posterior beliefs over states of nature. 
After computing $\pvec^s$, the receiver plays an action maximizing their utility in $\pvec^s$.
As it is customary in the literature~\cite{Castiglioni2020online,castiglioniMulti2021}, we assume that the receiver breaks ties in favor of the sender.
In the following, letting $\brset^k_\pvec \coloneqq \argmax_{a \in A} \sum_{\theta\in \Theta} \p_\theta u_\theta^k(a)$ be the set of actions that maximize the expected utility of the receiver of type $k \in K$ in any posterior $\pvec \in \pset$, we denote by $b^k_\pvec \in \argmax_{a \in \brset^k_\pvec} \sum_\theta \p_\theta u_\theta^s(a)$ the action in $\brset^k_\pvec$ that is actually played by the receiver of type $k$ in posterior $\pvec$.
%
%

In our \emph{Bayesian persuasion framework with type reporting}, the sender asks the receiver to report their type before observing the realized state of nature.
This enables the sender to increase their expected utility. 
%
In particular, before the receiver reports their type, the sender proposes to the receiver a \emph{menu} $\Phi=\{\phi^k\}_{k \in K}$ of signaling schemes, committing to send signals according to the signaling scheme $\phi^k $ if the receiver reports their type to be $k \in K$.
In details, the interaction goes on as follows:
(i) the sender proposes a menu $\Phi=\{\phi^{k}\}_{k \in K}$ to the receiver;
(ii) the receiver reports a type $k \in K$ that maximizes their expected utility given the proposed menu;
(iii) the sender observes the realized state of nature $\theta \sim \muvec$;
(iv) the sender draws a signal $s \in S$ according to $\phi^k_\theta$ and communicates it to the receiver;
%
%
%
finally, the interaction terminates with steps (iv) and (v) of the classical setting described above.
%

Notice that, in step (ii), the receiver of type $k \in K$ can compute their expected utility for each signaling scheme $\phi^{k'}$ in the menu as
\[
	\sum_{\theta \in \Theta} \mu_\theta \mathbb{E}_{s\sim \phi^{k'}_\theta} \left[ u^k_\theta \left( b^k_{\pvec^s} \right) \right],
\]	
and, then, they can report a type $k' \in K$ whose corresponding signaling scheme $\phi^{k'}$ maximizes their expected utility.
%

%
We focus on menus of signaling schemes that are \emph{incentive compatible} (IC), \emph{i.e.}, in which the receiver of type $k$ is incentivized to report their true type, for any $k \in K$.\footnote{Notice that, by a revelation-principle-style argument (see the book by~\citet{shoham2008multiagent} for some examples of these kind of arguments), focusing on IC menus of signaling schemes is w.l.o.g. when looking for a sender-optimal menu.}
Formally, a menu $\Phi=\{\phi^{k}\}_{k \in K}$ is IC if, for every type $k \in K$, the following constraints are satisfied:
\begin{equation} \label{eq:IC}
\hspace{-2mm} \sum_{\theta \in \Theta} \hspace{-0.5mm} \mu_\theta \mathbb{E}_{s\sim \phi^k_\theta} \hspace{-1mm} \left[ u^k_\theta \left(b^k_{\pvec^s} \right) \hspace{-0.5mm} \right]  \ge \sum_{\theta \in \Theta} \hspace{-0.5mm}\mu_\theta \mathbb{E}_{s\sim \phi^{k'}_\theta} \hspace{-1mm} \left[ u^k_\theta \left( b^k_{\pvec^s} \right) \hspace{-0.5mm} \right] \quad \forall k' \neq k.
\end{equation}

We say that a signaling scheme is \emph{direct} if $S=A$, which means that signals correspond to action recommendations for the receiver.
Moreover, we say that a direct signaling scheme is \emph{persuasive} if the receiver has an incentive to follow the action recommendations that they receive as signals, when they report their true type.
%
%
It is easy to check that a menu $\Phi =\{\phi^{k}\}_{k \in K}$ of direct and persuasive signaling schemes is IC if
\begin{align} \label{eq:ICDirect}
 \hspace{-1mm} \sum_{a \in A} \sum_{\theta \in \Theta} \hspace{-0.5mm} \mu_\theta \phi^k_\theta(a) u^k_\theta(a) \ge \hspace{-1mm} \sum_{a \in A} \max_{a' \in A } \sum_{\theta \in \Theta} \hspace{-0.5mm} \mu_\theta \phi^{k'}_\theta(a) u^k_\theta(a') \hspace{1mm} \forall  k'\neq k.
\end{align}
%

In the rest of this work, we will use the well-known equivalence between signaling schemes and distributions over receiver's posteriors (see~\citep{Kamenica2019Bayesian} for further details).
In particular, a signaling scheme $\phi$ in equivalent to a probability distribution $\gamma \in \Delta_{\Xi}$ over posteriors such that $\mathbb{E}_{\pvec \sim \gamma}[\pvec]=\muvec$, so that the expected utility of the receiver of type $k \in K$ under the  signaling scheme can be written as $\mathbb{E}_{\pvec \sim \gamma} \left[ \sum_{\theta \in \Theta} \p_\theta u^k_\theta \left( b^k_{\pvec} \right) \right]$.
Finally, when the distribution $\gamma \in \Delta_{\Xi}$ has finite support, we denote by $\gamma_\pvec$ the probability of $\pvec \in \Xi$ in $\gamma$.

\subsection{Model with Multiple Receivers} \label{sec:multi}

In a multi-receiver setting, there is a finite set $\rec \defeq \{r_i\}_{i=1}^{n}$ of $n$ receivers, and each receiver $r \in \rec$ has a type chosen from a finite set $\K_r \defeq \{k_{r, i} \}_{i=1}^{m_r}$ of $m_r$ different types.
We introduce $\K \defeq \bigtimes_{r \in \rec} \K_r$ as the set of type profiles, which are tuples $\kvec \in \K$ defining a type $k_r \in \K_r$ for each receiver $r \in \rec$.
Each receiver $r \in \rec$ has two actions available, defined by $\A_r \defeq \{a_0,a_1\}$.
We let $\A \defeq \bigtimes_{r \in \rec} \A_r$ be the set of action profiles specifying an action for each receiver.
%
%
%
The payoff of a receiver depends on the action played by them, while it does \emph{not} depend on the actions played by the other receivers, since we assume that there are \emph{no inter-agent externalities}.
Formally, a receiver $r \in \rec$ of type $k \in \K_r$ has a payoff function $u^{r,k}: \A_r \times\Theta\to [0,1]$.
%
%
%
%
The sender's payoffs depend on the actions played by all the receivers, and they are defined by $u^\mathsf{s} : \A \times \Theta \to [0,1]$.
For the ease of presentation, for every state of nature $\theta \in \Theta$, we introduce the function $f_\theta: 2^\rec \to [0,1]$ such that $f_\theta(R)$ represents the sender's payoff when the state of nature is $\theta$ and all the receivers in $R \subseteq \rec$ play action $a_1$, while the others play $a_0$.
In the rest of this work, we assume that the sender's payoffs are \emph{monotone non-decreasing} in the set of receivers playing $a_1$.
Formally, for each state $\theta \in \Theta$, we let $f_\theta(R) \le f_\theta(R')$ for every $R \subseteq R' \subseteq \rec$, while $f_\theta(\varnothing) = 0$ for the ease of presentation.
As it is customary, we focus on three families of functions: \emph{submodular}, \emph{supermodular}, and \emph{anonymous}.
We say that $f_\theta$ is submodular, respectively supermodular, if for $R,R' \subseteq \rec $: $f_\theta(R\cap R')+f_\theta(R\cup R')\le f_\theta(R)+f_\theta(R')$, respectively $f_\theta(R\cap R')+f_\theta(R\cup R')\ge  f_\theta(R)+f_\theta(R')$.
The function $f_\theta$ is {anonymous} if $f_\theta(R) = f_\theta(R')$ for all $R,R' \subseteq \rec : |R| = |R'|$.

With multiple receivers, the sender must send a signal to each of them. 
%
%
In this work, we focus on \emph{private} signaling, where each receiver has their own signal that is privately communicated to them.
Formally, there is a set $\sset_r$ of possible signals for each receiver $r \in \rec$.
Then, $\phi : \Theta \to \Delta_\sset$ is a signaling scheme,
where $\sset \defeq \bigtimes_{r \in \rec} \sset_r$ is the set of signal profiles, which are tuples $\svec \in \sset$ defining a signal $s_r \in \sset_r$ for each receiver $r \in \rec$. 
%
%
%
We denote by $\phi_\theta$ the probability distribution over signal profiles corresponding to state $\theta \in \Theta$, while we let $\phi_\theta(\svec)$ be the probability of sending $\svec \in \sset$.
Given a signaling scheme $\phi$, we define the resulting \emph{marginal signaling scheme} for receiver $r \in \rec$ as $\phi^r: \Theta \to \sset_r$.
Formally, for every $s \in \sset_r$, it holds that $\phi^r_\theta(s)= \Pr_{\svec \sim \phi_\theta} \left\{  s_r=s \right\}$.
%
Notice that receiver $r$'s posterior beliefs and expected utilities only depend on the marginal signaling scheme $\phi^r$.


The interaction between the sender and the receivers goes on as follows:
(i) the sender proposes to each receiver $r \in \rec$ a menu of marginal signaling schemes $\Phi^r = \{\phi^{r,k}\}_{k \in \K_r}$;
(ii) each receiver $r \in \rec$ reports a type $k_r \in \K_r$ such that $\phi^{r,k_r}$ is the marginal signaling scheme maximizing their expected utility;
(iii) the sender commits to a signaling scheme $\phi$ whose resulting marginal signaling schemes $\phi^r$ are such that $\phi^r \coloneqq \phi^{r,k_r}$ for all $r \in \rec$;
%
(iv) the sender observes the realized state of nature $\theta \sim \muvec$ and draws a signal profile $\svec \sim \phi_\theta$;
(v) each receiver $r \in \rec$ observes their signal $s_r$, rationally updates their prior belief over $\Theta$ according to the {\em Bayes} rule, and selects an action maximizing their expected utility.
Notice that the sender only needs to propose marginal signaling schemes to the receivers (rather than general ones), since the expected utility of each receiver only depends on their marginal signaling scheme, and \emph{not} on the others.
Thus, the sender can delay the choice of the (general) signaling scheme after types have been reported.

As customary, we assume that the receivers break ties in favor of the sender.
Since functions $f_\theta$ are monotone, this amounts to play $a_1$ whenever indifferent between the two actions.
Moreover, we say that a signaling scheme is direct and persuasive if  $S=\A$ and the receivers are better off playing recommended actions.
 We denote with $R\subseteq \rec$ the direct signal profile in which it is recommended to play $a_1$ to all the receiver in $R$ and $a_0$ to all the receiver in $\rec \setminus R$.

Similarly to the single-receiver case, we restrict the attention to IC menu of marginal signaling schemes.
Thus, in a multi-receiver setting, a sender's strategy is composed by an IC menu of marginal signaling scheme $\Phi^r = \{\phi^{r,k}\}_{k \in \K_r}$ for each receiver $r \in \rec$, and a set of signaling schemes $\{\phi^\kvec\}_{\kvec \in \K}$ (one per type profile possibly reported by the receivers) such that the resulting marginal signaling schemes satisfy $\phi^{\kvec, r} = \phi^{r,k_r}$ for all $\kvec \in \K$ and $r \in \rec$.
%
%


\subsection{Sender's Computational Problems}\label{sec:problemDef}

We consider the computational problem in which, given the probability distribution over the receivers' types, the sender wants to maximize their expected utility.
In the single-receiver case, the receiver's type $k \in K$ is drawn from a {known} distribution $\lambda \in \Delta_K$.
We call MENU-SINGLE the problem of computing an IC menu of signaling schemes $\Phi = \{ \phi^k \}_{k \in K}$ that maximizes the sender's expected utility, given a probability distribution $\lambda \in \Delta_K$ as input.
In the multi-receiver case, the types profiles $\kvec \in \K$ are drawn from a {known} distribution $\lambda \in \Delta_{\bar \K}$, where $\bar \K\subseteq \K$ is a subset of possible types vectors, \emph{i.e.}, the support of $\lambda$.
%
We call MENU-MULTI the problem of computing a sender's strategy---made by an IC menu of marginal signaling schemes $\Phi^r = \{\phi^{r,k}\}_{k \in \K_r}$ for each receiver $r \in \rec$ and a set of signaling schemes $\{\phi^\kvec\}_{\kvec \in \K}$---that maximizes the sender's expected utility, given $\lambda \in \Delta_{\bar \K}$ as input.\footnote{A polynomial-time algorithm for MENU-MULTI must run in time polynomial in the size of the instance and in the size of the support of the distribution $\lambda$. Notice that, in general, the latter may be exponential in the number of receivers $n$.}

\section{Single-receiver Problem}\label{sec:results_single}

%
%

We show how to solve MENU-SINGLE in polynomial time.

By using the equivalence between signaling schemes and distributions over posteriors (see Section~\ref{sec:single}), it is easy to check that an optimal menu of signaling schemes can be computed by the following LP~\ref{lp:infinite} with an \emph{infinite} number of variables, namely $\gamma^k \in \Delta_{\Xi}$ for $k \in K$.
In LP~\ref{lp:infinite}, the objective is the sender's expected utility assuming the receiver reports their true type, the first set of constraints encodes IC conditions, while the last one ensures that the distributions over posteriors correctly represent signaling schemes.
%
\begin{align}\label{lp:infinite}
	\max_{ \gamma } & \,\, \sum_{k \in K} \lambda_k  \mathbb{E}_{\pvec\sim \gamma^{k}} \sum_{\theta \in \Theta} \p_\theta u^\mathsf{s}_\theta \left( b^k_\pvec \right) \quad \text{s.t.} \\
	& \hspace{-0.3cm}\mathbb{E}_{\pvec\sim \gamma^{k}} \left[ \sum_{\theta \in \Theta} \p_\theta u^{k}_\theta \left( b^{k}_\pvec \right) \right] \ge \mathbb{E}_{\pvec\sim \gamma^{k'}} \left[ \sum_{\theta \in \Theta} \p_\theta u^{k}_\theta \left( b^k_\pvec \right) \right]  \forall k\neq k' \in K  \nonumber \\
	& \hspace{-0.3cm} \mathbb{E}_{\pvec\sim \gamma^{k}} \left[ \p_\theta \right] =\mu_\theta  \hspace{4cm}\forall  \theta \in \Theta, \forall k \in K \nonumber \\
	& \hspace{-0.3cm} \gamma^k \in \Delta_{\Xi} \hspace{6cm} \forall k \in K. \nonumber
\end{align}

As a first step, we show that there always exists an optimal solution to LP~\ref{lp:infinite} in which the probability distributions $\gamma^k \in \Delta_{\Xi}$ have {finite} support.
This allows us to compute an optimal menu of signaling schemes by solving an LP with a \emph{finite} number of variables.
In the following, for every $k \in K$ and $a \in A$, let $\pset^{k,a} \coloneqq \left\{ \pvec \in \pset:a \in \brset^k_\pvec \right\}$ and $\hat \pset^{k,a} \coloneqq \left\{ \pvec \in \pset: a=b^k_\pvec \right\}$.
Moreover, for every $\avec \in \bigtimes_{k \in K} A$, let $\pset^\avec \coloneqq \bigcap_{k \in K} \pset^{k,a_k}$ and $\hat \pset^\avec \coloneqq \bigcap_{k \in K} \hat \pset^{k,a_k}$, where $a_k$ is the $k$-th component of $\avec$.
Finally, let $\pset^*$ be such that $\pset^* \coloneqq \bigcup_{\avec \in \bigtimes_{k \in K} A} V(\pset^\avec)$, where $V(\pset^\avec)$ denotes the set of vertices of the polytope $\pset^\avec$.
The following Lemma~\ref{lm:finite} shows that there always exists an optimal menu of signaling schemes that can be encoded as probability distributions over $\pset^*$.
Formally, the lemma is proved by showing that the following LP~\ref{lp:GenFinite} is equivalent to LP~\ref{lp:infinite}.
%
\begin{subequations}\label{lp:GenFinite}
 	\begin{align} 
 		\max_{{\gamma}} & \,\, \sum_{k \in K} \lambda_k \sum_{\pvec \in \pset^*} \gamma^k_{\pvec} \sum_{\theta \in \Theta} \p_\theta u^{\mathsf{s}}_\theta \left( b^k_\pvec \right) \quad \text{s.t.} \label{lp:GenFiniteObj} \\
 		& \hspace{-0.2cm} \sum_{\pvec \in \pset^*} \hspace{-0.1cm} \gamma^{k}_{\pvec}  \sum_{\theta \in \Theta} \hspace{-0.1cm} \p_\theta u_\theta^k\left( b^k_\pvec \right) \hspace{-0.05cm} \ge \hspace{-0.1cm} \sum_{\pvec \in \pset^*} \hspace{-0.1cm} \gamma^{k'}_{\pvec} \hspace{-0.1cm} \sum_{\theta \in \Theta} \hspace{-0.1cm} \p_\theta u_\theta^k \left( b^k_{\pvec} \right)  \forall k\neq k' \hspace{-0.05cm} \in \hspace{-0.05cm} K \label{lp:GenFinite1}\\
 		& \hspace{-0.2cm} \sum_{\pvec \in \pset^*} \gamma^{k}_{\pvec}  \p_\theta =\mu_\theta \hspace{3.1cm} \forall k \in K, \forall \theta \in \Theta \label{lp:GenFinite2} \\
 		& \hspace{-0.2cm} \sum_{\pvec \in \pset^*} \gamma^k_\pvec = 1\hspace{4.6cm} \forall k \in K.
 	\end{align}
\end{subequations}
Intuitively, the result is shown by noticing that, once fixed the receiver's best responses to $\avec \in \bigtimes_{k \in K} A$, the sums over $\Theta$ in the objective and the constraints of LP~\ref{lp:infinite} are linear in the posterior $\pvec$, which allows to apply Carathèodory theorem to replace each posterior with a probability distributions over the vertices of $\pset^\avec$.

\begin{restatable}{lemma}{lemmaone}\label{lm:finite}
	In single-receiver instances, there always exists a sender-optimal menu of signaling schemes that can be encoded as probability distributions over the finite set of posteriors $\pset^*$.
\end{restatable}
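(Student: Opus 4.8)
The plan is to show that LP~\ref{lp:GenFinite} is equivalent to LP~\ref{lp:infinite}, in the sense that any feasible solution of one can be transformed into a feasible solution of the other with the same objective value. The direction from LP~\ref{lp:GenFinite} to LP~\ref{lp:infinite} is trivial, since $\pset^* \subseteq \pset$ and a finitely-supported distribution over $\pset^*$ is a particular distribution over $\pset = \Delta_\Theta$; one only needs to check the barycenter constraints $\mathbb{E}_{\pvec \sim \gamma^k}[\pvec] = \muvec$ match. The substantive direction is to take an optimal $\{\gamma^k\}_{k \in K}$ for LP~\ref{lp:infinite} and produce a finitely-supported solution over $\pset^*$ without decreasing the objective and without violating the IC constraints.

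The key idea is the partition of $\pset = \Delta_\Theta$ according to the receivers' best-response profiles. For each action profile $\avec \in \bigtimes_{k \in K} A$, the set $\hat \pset^\avec = \bigcap_{k \in K} \hat \pset^{k, a_k}$ collects exactly the posteriors at which, for every type $k$, the receiver of type $k$ actually plays $a_k$ (recall $b^k_\pvec$ is the tie-broken best response). These sets partition $\pset$. Restricted to a single piece $\hat \pset^\avec$, all the terms $u^k_\theta(b^k_\pvec)$ and $u^\mathsf{s}_\theta(b^k_\pvec)$ become \emph{constants} in $a_k$, so both the objective integrand and every IC-constraint integrand become \emph{linear functions of $\pvec$} on that piece. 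Concretely, I would first decompose each $\gamma^k$ as a mixture $\gamma^k = \sum_{\avec} q^k_\avec \, \gamma^{k}_\avec$, where $\gamma^k_\avec$ is $\gamma^k$ conditioned on $\hat \pset^\avec$ and $q^k_\avec = \gamma^k(\hat\pset^\avec)$. Let $\pvec^k_\avec \defeq \mathbb{E}_{\pvec \sim \gamma^k_\avec}[\pvec]$ be the conditional barycenter, which lies in $\hat\pset^\avec \subseteq \pset^\avec$ (the latter is closed; note $\hat\pset^\avec$ need not be). Because the relevant functions are linear in $\pvec$ on $\hat\pset^\avec$, replacing $\gamma^k_\avec$ by the point mass at $\pvec^k_\avec$ leaves the objective contribution and each IC contribution of block $\avec$ unchanged, and preserves the overall barycenter $\sum_\avec q^k_\avec \pvec^k_\avec = \muvec$. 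This already yields an optimal solution supported on $\bigcup_\avec \pset^\avec$, but the barycenter $\pvec^k_\avec$ need not be a vertex of $\pset^\avec$.

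The final step applies Carath\'eodory's / Minkowski's theorem inside each polytope $\pset^\avec$: each $\pvec^k_\avec \in \pset^\avec$ is a convex combination of vertices in $V(\pset^\avec) \subseteq \pset^*$, and — crucially — since we are now on the \emph{closed} polytope $\pset^\avec$ where $a_k \in \brset^k_\pvec$ but $b^k_\pvec$ may differ from $a_k$, one must argue this re-expansion does not hurt. The point is that on $\pset^\avec$ the receiver of type $k$ is \emph{indifferent} between $a_k$ and $b^k_\pvec$ (both are best responses), so replacing the integrand's best-response value by the constant value at $a_k$ only changes the receiver's utility terms by zero; for the sender's objective, tie-breaking in the sender's favor means the sender's value at any vertex where $a_k \in \brset^k_\pvec$ is at least the value using $a_k$, so pushing mass to vertices can only help. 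Hence expanding $\pvec^k_\avec$ over $V(\pset^\avec)$ and using the evaluations at those vertices gives a feasible solution of LP~\ref{lp:GenFinite} with objective at least that of the original optimum, completing the equivalence.

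The main obstacle I expect is the bookkeeping around the distinction between $\pset^{k,a}$ (where $a$ is \emph{a} best response) and $\hat\pset^{k,a}$ (where $a$ is \emph{the} tie-broken best response): the decomposition of $\gamma^k$ must use the $\hat\pset$ pieces (so that $b^k_\pvec$ is genuinely constant there), but the Carath\'eodory step must use the closed polytopes $\pset^\avec$ (so that vertices exist and are well-defined), and one must carefully verify that moving mass from a point of $\hat\pset^\avec$ to vertices of $\pset^\avec$ — which may lie in a different $\hat\pset$ cell — does not break the IC constraints (it does not, by indifference) and does not decrease the sender's objective (it does not, by sender-favorable tie-breaking). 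Everything else is linearity of expectation plus the convexity/compactness of $\Delta_\Theta$ and its sub-polytopes.
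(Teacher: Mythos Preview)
Your proposal is correct and follows essentially the same route as the paper's proof: partition $\pset$ by the cells $\hat\pset^{\avec}$, replace each conditional distribution by its barycenter $\pvec^{k,\avec}\in\pset^{\avec}$ (exploiting linearity on each cell), and then apply Carath\'eodory inside the closed polytope $\pset^{\avec}$ to land on $V(\pset^{\avec})\subseteq\pset^*$. You also correctly anticipate the one subtlety the paper handles, namely that pushing mass from $\hat\pset^{\avec}$ to vertices of $\pset^{\avec}$ preserves each receiver-utility term exactly (since $a_k\in\brset^k_\pvec$ on all of $\pset^{\avec}$) while the sender's term can only increase by sender-favorable tie-breaking; this is precisely the content of the paper's Equations~\eqref{eq:distToExp} and~\eqref{eq:distToExp_send}.
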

%

Next, we show that there always exists an optimal menu of direct and persuasive signaling schemes, and that it can be computed in polynomial time by solving a polynomially-sized LP obtained by further simplifying LP~\ref{lp:GenFinite} (Theorem~\ref{thm:single}).
%
Notice that, in a Bayesian persuasion problem without type reporting, an optimal signaling scheme must employ a signal for each action profile $\avec\in  \bigtimes_{k \in K} A$.
Since these profiles are exponentially many, an optimal direct and persuasive signaling scheme cannot be computed in polynomial time by linear programming.
Indeed, without typer reporting, the problem has been shown to be \NPHard~\citep{Castiglioni2020online}.

An intuition behind the proof of Theorem~\ref{thm:single} is provided in the following.
Fix type $k \in K$ and action $a \in A$.
Suppose that an optimal menu of signaling schemes employs ${\gamma}^k \in \Delta_{\pset^*}$ for the type $k$, and that ${\gamma}^k$ has in the support two posteriors $\pvec^1,\pvec^2 \in  \hat \pset^{k,a}$ with probabilities $\gamma^{k}_{\pvec^1}$ and $\gamma^k_{\pvec^2}$.
Consider a new signaling scheme that replaces the two posteriors $\pvec^1$ and $\pvec^2$  with their convex combination $\pvec^* \in \Delta_{\pset^*}$, so that
\[
	\p^*_\theta= \frac{ \gamma^k_{\pvec^1} \p^1_\theta  +\gamma^k_{\pvec^2} \p^2_\theta }{ \gamma^k_{\pvec^1} +\gamma^k_{\pvec^2}} \,\, \text{for every} \,\, \theta \in \Theta \,\, 	\text{and}
\,\,
	\gamma^k_{\pvec^*}=\gamma^k_{\pvec^1} +\gamma^k_{\pvec^2}.
\]
Both $\pvec^1$ and $\pvec^2$ induce the same best response of the receiver of type $k$, and Objective~\eqref{lp:GenFiniteObj} and Constraints~\eqref{lp:GenFinite2} are linear in $\pvec$.
Hence, replacing the two posteriors with their convex combination $\pvec^*$ preserves the value of the objective, while maintaining the constraints satisfied.
The same does \emph{not} hold for Constraints~\eqref{lp:GenFinite1}, which are linear in the posterior only if we fix the best responses of all the receiver's types.
For Constraints~\eqref{lp:GenFinite1}, if we consider an inequality in which $\gamma^k$ appears in the left hand side, the sum over $\Theta$ is linear in $\pvec$ and 
\[
	\gamma^k_{\pvec^1} \sum_{\theta \in \Theta} \p^1_\theta  u^{k}_\theta(a) + \gamma^k_{\pvec^2} \sum_{\theta \in \Theta} \p^2_\theta  u^{k}_\theta(a) = \gamma^k_{\pvec^*} \sum_{\theta \in \Theta} \p^*_\theta u^k_\theta(a).
\]
Instead, if $\gamma^k$ appears in the right hand side, by the convexity of the max operator it hods:
\begin{align*}
	\gamma^{k}_{\pvec^1} \max_{a' \in A}& \sum_{\theta \in \Theta} \p^1_\theta u_\theta^{k'}(a') + \gamma^{k}_{\pvec^2} \max_{a' \in A} \sum_{\theta \in \Theta} \p^1_\theta u_\theta^{k'}(a') \\
	&\ge \max_{a' \in A} \left[ \gamma^{k}_{\pvec^1} \sum_{\theta \in \Theta} \p^1_\theta u_\theta^{k'}(a') + \gamma^{k}_{\pvec^2}  \sum_{\theta \in \Theta} \p^1_\theta u_\theta^{k'}(a') \right] \\
	& = \max_{a' \in A} \sum_{\theta \in \Theta} \p^*_\theta u^{k'}_\theta(a').
\end{align*}
Therefore, if we replace two posteriors that induce the same receiver's best responses with their convex combination, the left hand side of Constraints~\eqref{lp:GenFinite1} is preserved, while the value of the right hand side can only decrease, guaranteeing that Constraints~\eqref{lp:GenFinite1} remain satisfied.
By using this idea, we can join all the posteriors that induce the same best responses.
Finally, by resorting to the equivalence between signaling schemes and distributions over, we obtain the following LP~\ref{lp:direct} of polynomial size.
Hence, an optimal menu of signaling schemes can be computed in polynomial time.
 \begin{subequations}\label{lp:direct}
 	\begin{align} 
 		\max_{\phi,l} & \,\, \sum_{k \in K} \lambda_k  \sum_{\theta \in \Theta} \mu_\theta  \sum_{a \in A} \phi^k_\theta(a) u^\mathsf{s}_\theta(a) \quad \text{s.t.} \\
 		& \hspace{-5mm} \sum_{a \in A} \sum_{\theta \in \Theta} \mu_\theta   \phi^k_\theta(a) u^k_\theta(a) \ge \sum_{a \in A} l^{k,k'}_a \hspace{15mm} \forall k \neq k' \in K \label{lp:direct1}\\
 		& \hspace{-5mm} l^{k,k'}_a \ge \sum_{\theta \in \Theta} \mu_\theta \phi^{k'}_\theta(a) u^k_\theta(a') \hspace{11mm} \forall k\neq k' \in K, \forall a, a' \in A \label{lp:direct2}\\
 		& \hspace{-5mm} \sum_{\theta \in \Theta} \hspace{-0.5mm} \mu_\theta \phi^{k}_\theta(a) u^k_\theta(a) \hspace{-0.5mm}  \ge \hspace{-1mm} \sum_{\theta \in \Theta} \hspace{-0.5mm} \mu_\theta \phi^{k'}_\theta(a) u^k_\theta(a') \hspace{1mm} \forall k \in K, \forall a,a' \in A \label{lp:direct3}\\
 		& \hspace{-5mm} \sum_{a \in A} \phi^k_\theta(a) = 1 \hspace{36mm} \forall k \in K, \forall \theta \in \Theta.
 	\end{align}
 \end{subequations}
 
 Notice that Constraints~\eqref{lp:direct1}~and~\eqref{lp:direct2} are equivalent to the IC constraints for direct and persuasive signaling schemes, which are those specified in Equation~\eqref{eq:ICDirect}, 
 where $\max_{a' \in A} \sum_{\theta \in \Theta} \mu_\theta   \phi^{k'}_\theta(a) u^k_\theta(a')$ is the best response of the receiver of type $k \in K$ to the direct signal $a$ for the receiver of type $k' \in K$d .
 Moreover, Constraints~\eqref{lp:direct3} force the signaling schemes to be persuasive.
 
 \begin{restatable}{theorem}{theoremSingle}\label{thm:single}
 	In single-receiver instances, there always exists an optimal menu of direct and persuasive signaling schemes. Moreover, it can be computed in polynomial time.
 \end{restatable}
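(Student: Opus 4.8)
The plan is to build on Lemma~\ref{lm:finite}, which already gives a sender-optimal menu encoded as distributions $\gamma^k$ over the finite vertex set $\pset^*$, i.e.\ an optimal solution to LP~\ref{lp:GenFinite}. First I would show that among such optimal solutions there is one in which each $\gamma^k$ puts positive mass on at most one posterior per ``best-response action'' of type $k$; equivalently, the support of $\gamma^k$ intersects each set $\hat\pset^{k,a}$ in at most one point. To do this I take any optimal $\gamma^k$, and whenever two posteriors $\pvec^1,\pvec^2$ in its support lie in the same $\hat\pset^{k,a}$, I replace them by the single convex-combination posterior $\pvec^*$ with mass $\gamma^k_{\pvec^1}+\gamma^k_{\pvec^2}$ as in the displayed construction preceding the theorem. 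The key verifications are: (a) the objective~\eqref{lp:GenFiniteObj} and the prior-consistency constraints~\eqref{lp:GenFinite2} are linear in $\pvec$ once $b^k_\pvec$ is fixed, hence unchanged; (b) in any IC constraint~\eqref{lp:GenFinite1} where $\gamma^k$ sits on the left, the term is linear in $\pvec$ with $b^k$ fixed, hence unchanged; (c) in any IC constraint where $\gamma^k$ sits on the right, the relevant term is $\max_{a'\in A}\sum_\theta \p_\theta u^k_\theta(a')$, a convex function of $\pvec$, so by Jensen the merged term can only decrease, keeping the inequality satisfied. Iterating the merge finitely many times (the support strictly shrinks each time) yields the desired ``one posterior per best response'' optimal menu.

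Next I would reinterpret this structured solution as a menu of direct and persuasive signaling schemes. For each type $k$, the at-most-$\ell$ posteriors in the support of $\gamma^k$, one per action $a\in A$ that is actually played, define a signaling scheme $\phi^k$ whose signal set is $A$: the signal $a$ is sent with the total probability $\gamma^k$ assigns to the (unique) posterior in $\hat\pset^{k,a}$, and the conditional distribution over $\Theta$ given signal $a$ is exactly that posterior. By construction the receiver of type $k$, upon reporting truthfully and receiving signal $a$, has posterior lying in $\hat\pset^{k,a}$, so their sender-favored best response is precisely $a$; hence $\phi^k$ is persuasive, which is exactly Constraints~\eqref{lp:direct3}. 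The prior-consistency constraints~\eqref{lp:GenFinite2} translate into $\sum_a \phi^k_\theta(a)\mu_\theta$ summing correctly. For the IC constraints: the left-hand side of~\eqref{eq:ICDirect} equals the LHS of~\eqref{lp:GenFinite1}; the right-hand side $\sum_a \max_{a'}\sum_\theta \mu_\theta \phi^{k'}_\theta(a)u^k_\theta(a')$ equals the RHS of~\eqref{lp:GenFinite1} for the deviating menu entry $\gamma^{k'}$, since a receiver of type $k$ who received type-$k'$'s signal $a$ would update to the posterior proportional to $(\mu_\theta\phi^{k'}_\theta(a))_\theta$ and best-respond to it. Thus the structured optimal solution is feasible for LP~\ref{lp:direct} (with $l^{k,k'}_a$ set to the max) and achieves the optimal value, proving the existence claim.

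Finally, for the complexity claim I would observe that LP~\ref{lp:direct} has variables $\phi^k_\theta(a)$ and $l^{k,k'}_a$, numbering $O(\nType\,\nState\,\nAct + \nType^2\nAct)$, and constraints numbering $O(\nType^2\nAct^2)$, all with coefficients of polynomial bit-length; hence it is solvable in time polynomial in the input size, and any optimal solution of it yields a menu of direct persuasive signaling schemes that is optimal overall (optimality is inherited from the chain LP~\ref{lp:infinite} $\equiv$ LP~\ref{lp:GenFinite} $\ge$ LP~\ref{lp:direct}, with the reverse inequality from the construction above). The main obstacle I anticipate is step~(c) above — making fully rigorous that merging posteriors within a single best-response cell never violates any IC constraint, in particular carefully tracking, for the constraints where $\gamma^k$ appears on the right-hand side, that the convexity argument applies uniformly over all deviating types $k'$ simultaneously, and that the merge does not inadvertently move a posterior out of the polytope $\pset^\avec$ (it does not, since $\hat\pset^{k,a}\subseteq\pset^{k,a}$ and these cells are convex). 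The rest is bookkeeping translating between the distribution-over-posteriors and signaling-scheme representations.
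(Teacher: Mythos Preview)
Your proposal is correct and follows essentially the same approach as the paper. In fact, your merging argument with the convexity-of-$\max$ observation for step~(c) is exactly the intuition the paper spells out in the paragraphs preceding Theorem~\ref{thm:single}; the paper's formal proof then streamlines this by directly defining $\phi^k_\theta(a) = \frac{1}{\mu_\theta}\sum_{\pvec \in \hat\pset^{k,a}\cap\pset^*} \gamma^k_\pvec\,\p_\theta$ (aggregating all posteriors in a best-response cell at once rather than merging pairwise), but the verification of feasibility for LP~\ref{lp:direct} is the same chain of inequalities you outline.
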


\section{Multi-receiver Problem} \label{sec:results_multi}

In this section, we switch the attention to MENU-MULTI.
%
As we will show in the following (Theorem~\ref{thm:direct}), given any multi-receiver instance, there always exists an optimal sender's strategy that uses menus of direct and persuasive marginal signaling schemes.
This allows us to formulate the sender's problem as the following LP~\ref{lp:multi}, which will be crucial for the results in the rest of this section.

Since $\phi^k_\theta(a_0)=1-\phi^k_\theta(a_1)$ for every $r \in \rec$, $k \in \bar \K_r$, and $\theta \in \Theta$, by letting $x^{r,k}_\theta=\phi^k_\theta(a_1)$ we can formulate the following LP:
%
%
\begin{subequations}\label{lp:multi}
	\begin{align}
	\max_{\phi\ge0,x\ge0} & \,\, \sum_{\theta \in \Theta} \mu_\theta \sum_{\kvec \in \bar \K} \lambda_\kvec \sum_{R \subseteq \rec} \phi^{\kvec}_\theta(R) f_\theta(R) \quad \text{s.t.} \\
	& \hspace{-1cm} \sum_{R \subseteq \rec:r \in R} \phi^{\kvec}_\theta(R)  = x^{r,k_r}_\theta \hspace{1.4cm} \forall \kvec \in \bar \K, \forall r \in \rec, \forall \theta \in \Theta \label{lp:multi1}\\
	& \hspace{-1cm} \sum_{\theta \in \Theta} \mu_\theta x^{r,k}_\theta u^{r,k}_\theta(a_1) +\sum_{\theta \in \Theta} \mu_\theta \left( 1-x^{r,k}_\theta \right) u^{r,k}_\theta(a_0) \nonumber \\
	& \hspace{-0.5cm} \ge l^{r,k,k'}_{a_1}+l^{r,k,k'}_{a_0} \hspace{2cm} \forall r \in \rec, \forall  k\neq k' \in \K_r \label{lp:multi2}\\
	& \hspace{-1cm} l^{r,k,k'}_{a_1} \ge \sum_{\theta \in \Theta} \mu_\theta x^{r,k'}_\theta u^{r,k}_\theta(a) \nonumber\\
	& \hspace{2.55cm} \forall r \in \rec, \forall a \in \A_r, \forall k\neq k' \in \K_r \label{lp:multi3}\\
	& \hspace{-1cm} l^{r,k,k'}_{a_0} \ge \sum_{\theta \in \Theta} \mu_\theta \left( 1-x^{r,k'}_\theta \right) u^{r,k}_\theta(a) \nonumber \\
	& \hspace{2.55cm} \forall r \in \rec, \forall a \in \A_r, \forall k \neq k' \in \K_r \label{lp:multi4}\\
	& \hspace{-1cm} \sum_{\theta \in \Theta} \mu_\theta x^{r,k}_\theta \left[ u^{r,k}_\theta(a_1) -u^{r,k}_\theta(a_0) \right] \ge 0 \hspace{1.05cm} \forall r \in \rec, \forall k \in \K_r \label{lp:multi5}\\
	& \hspace{-1cm} \sum_{\theta \in \Theta} \mu_{\theta} \left( 1-x^{r,k}_\theta \right) \left[ u^{r,k}_\theta(a_0) -u^{r,k}_\theta(a_1) \right] \ge 0 \hspace{2mm} \forall r \in \rec, \forall k \in \K_r \label{lp:multi6}\\
	& \hspace{-1cm} \sum_{R \subseteq \rec}  \phi^{\kvec}_\theta(R) = 1 \hspace{3.4cm} \forall \kvec \in \bar \K, \forall \theta \in \Theta. \label{lp:multi7} 
	\end{align}
\end{subequations}
%
%
In the LP, Constraints~\eqref{lp:multi1} represent consistency conditions ensuring that the general signaling scheme $\phi^\kvec$ results in the marginal signaling schemes $\phi^{k,r}$, which are defined by means of variables $x_\theta^{r,k}$.
Constraints~\eqref{lp:multi2},~\eqref{lp:multi3},~and~\eqref{lp:multi4} represent IC constraints for the menus of marginal signaling schemes, where, as in LP~\ref{lp:direct}, we use Constraints~\eqref{lp:multi3}~and~\eqref{lp:multi4} with variables $l^{r,k,k'}_{a_0}$, $l^{r,k,k'}_{a_1}$ to compute receivers' expected utilities of playing a best response.
Finally, Constraints~\eqref{lp:multi5}~and~\eqref{lp:multi6} encode the persuasiveness conditions, while Constraints~\eqref{lp:multi7} require the signaling scheme be well defined.

Next, we prove our main existence result supporting LP~\ref{lp:multi}.

\begin{restatable}{theorem}{theoremDirectMulti} \label{thm:direct}
	In multi-receiver instances, there always exists an optimal sender's strategy that uses menus of direct and persuasive marginal signaling schemes.
\end{restatable}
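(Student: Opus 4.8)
The plan is to start from an arbitrary optimal sender's strategy --- an IC menu of (possibly non-direct) marginal signaling schemes $\Phi^r=\{\phi^{r,k}\}_{k\in\K_r}$ for each $r\in\rec$ together with general signaling schemes $\{\phi^\kvec\}_{\kvec\in\bar\K}$ consistent with the menus, i.e. with $\phi^{\kvec,r}=\phi^{r,k_r}$ --- and to transform it, without decreasing the sender's expected utility and while keeping it IC, into a strategy whose marginal signaling schemes are direct and persuasive. Throughout I use the equivalence between signaling schemes and distributions over receivers' posteriors; no finite-support reduction is needed here because the transformation collapses signals onto the finite alphabets $\A_r$ and $2^\rec$.

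\emph{Step 1: direct-ify the marginals.} For each $r$ and $k\in\K_r$, replace the signal $s$ produced by $\phi^{r,k}$ with the recommendation $b^{r,k}_{\pvec^s}\in\{a_0,a_1\}$; equivalently, set $x^{r,k}_\theta\defeq\Pr_{s\sim\phi^{r,k}_\theta}\{b^{r,k}_{\pvec^s}=a_1\}$ and let the new $\phi^{r,k}$ send $a_1$ with probability $x^{r,k}_\theta$ in state $\theta$. Two observations make this safe. First, persuasiveness for the truthful report: the posterior conditional on recommendation $a_1$ is a convex combination of the original posteriors at which type $k$ weakly prefers $a_1$, and since receivers break ties toward $a_1$ (as $f_\theta$ is monotone), ``weakly prefer $a_1$'' is the closed half-space condition $\sum_\theta\p_\theta[u^{r,k}_\theta(a_1)-u^{r,k}_\theta(a_0)]\ge0$, which is preserved under convex combinations --- and symmetrically ``strictly prefer $a_0$'' is an open half-space condition, also preserved; this yields Constraints~\eqref{lp:multi5} and~\eqref{lp:multi6}. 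Second, IC is preserved: the truthful-report utility of type $k$ is unchanged, because the recommendation always coincides with the action type $k$ would have played, so the relevant sum is linear and invariant under the collapse; whereas the utility of type $k$ from reporting $k'\neq k$ can only decrease, because $\max_{a'\in\A_r}\sum_\theta(\cdot)_\theta u^{r,k}_\theta(a')$ is convex in the posterior, and collapsing several posteriors into one recommendation replaces an average of these maxima by the max evaluated at the average. Linearizing the maxima via the $l$-variables gives Constraints~\eqref{lp:multi2}, \eqref{lp:multi3}, and~\eqref{lp:multi4}.

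\emph{Step 2: direct-ify the joint schemes.} The delicate point is that, having modified the marginals independently, one still needs a joint signaling scheme matching both the new marginals and the old objective; I resolve this by collapsing $\phi^\kvec$ itself rather than rebuilding it. For each $\kvec\in\bar\K$ and $\theta\in\Theta$, define $\psi^\kvec_\theta\in\Delta_{2^\rec}$ by sending the recommendation profile $R=\{r\in\rec: b^{r,k_r}_{\pvec^{s_r}}=a_1\}$ with probability $\phi^\kvec_\theta(\svec)$. Since the marginal of $\phi^\kvec$ on receiver $r$ is exactly $\phi^{r,k_r}$, the marginal of $\psi^\kvec$ on $r$ sends $a_1$ with probability precisely $x^{r,k_r}_\theta$, i.e. it coincides with the direct-ified $\phi^{r,k_r}$ from Step~1; this gives the consistency Constraints~\eqref{lp:multi1} and, by Step~1, the persuasiveness of $\psi^\kvec$. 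Finally, under the transformed strategy the receivers report truthfully (new IC) and, by persuasiveness, play exactly the recommended profile $R$, so the sender's expected utility under $\psi^\kvec$ equals $\sum_\theta\mu_\theta\,\mathbb{E}_{\svec\sim\phi^\kvec_\theta}[f_\theta(\{r: b^{r,k_r}_{\pvec^{s_r}}=a_1\})]$, which is exactly the sender's utility under the original strategy. Hence the transformed strategy --- an IC menu of direct, persuasive marginal signaling schemes plus consistent direct joint schemes --- is optimal, and it is a feasible solution of LP~\ref{lp:multi} of the same value.

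The main obstacle is precisely Step~2: naively ``direct-ifying each part separately'' destroys the coupling between the marginals and the joint scheme, and the key move is to realize that collapsing the already-given $\phi^\kvec$ along the receivers' best-response maps automatically restores consistency with the new marginals and preserves the sender's objective. A secondary point needing care is the tie-breaking convention, which is what makes ``play $a_1$'' a closed half-space (and ``play $a_0$'' an open one), so that merging posteriors inducing the same action never flips anyone's behavior.
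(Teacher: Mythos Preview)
Your proof is correct, and it takes a genuinely different route from the paper's. The paper decouples the two layers: it argues (via the single-receiver machinery of Lemma~\ref{lm:finite} and Theorem~\ref{thm:single}) that each menu of marginal schemes can be replaced by a direct persuasive one in which the probability of recommending $a_1$ is pointwise at least as large, and then invokes monotonicity of $f_\theta$ to conclude that \emph{re-optimizing} the joint scheme over the new, larger marginals can only help. Your argument instead collapses the original joint scheme $\phi^{\kvec}$ along the receivers' best-response maps; because the marginals of $\phi^{\kvec}$ are exactly the old $\phi^{r,k_r}$, the push-forward $\psi^{\kvec}$ automatically has the new direct marginals, and the sender's objective is preserved \emph{exactly} rather than merely weakly. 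In particular, you never need to appeal to monotonicity of $f_\theta$ in the core step (only for the tie-breaking convention, which is part of the model), and you never need the Carath\'eodory reduction to $\Xi^*$. The paper's approach, on the other hand, reuses the single-receiver lemmas wholesale and makes explicit the modular structure ``fix marginals, then optimize the coupling'' that underlies LP~\ref{lp:multi}; your approach is shorter and makes the equality of objectives transparent.
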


\subsection{Supermodular/Anonymous Sender's Utility}\label{sec:results_multiSup}

LP~\ref{lp:multi} has an exponential number of variables and a polynomial number of constraints.
Nevertheless, we show that it is possible to apply the ellipsoid algorithm  to its dual formulation in polynomial time, provided access to a suitably-defined separation oracle.

\begin{restatable}{theorem}{theoremMulti}\label{thm:oracle}
	Given access to an oracle that solves $\max_{R \subseteq \rec} f_\theta(R)+\sum_{r\in R} w_r$ for any $\boldsymbol{w} \in \mathbb{R}^n$, there exists a polynomial-time algorithm that finds an optimal sender's strategy in any multi-receiver instance.
\end{restatable}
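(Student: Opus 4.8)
The plan is to attack LP~\ref{lp:multi} through its dual and run the ellipsoid method there, using the given oracle to build a separation oracle for the dual feasible region. Since LP~\ref{lp:multi} has exponentially many variables (one family $\phi^{\kvec}_\theta(R)$ indexed by subsets $R\subseteq\rec$ for each type profile and state) but only polynomially many constraints, its dual has polynomially many variables and exponentially many constraints, so it is the natural object on which to run the ellipsoid algorithm: a polynomial-size basic optimal dual solution exists, and what is needed is a polynomial-time separation oracle. First I would write the dual explicitly, grouping the dual variables by which primal constraint family they correspond to: a variable for each consistency constraint~\eqref{lp:multi1}, for each IC constraint~\eqref{lp:multi2}--\eqref{lp:multi4}, for the persuasiveness constraints~\eqref{lp:multi5}--\eqref{lp:multi6}, and for the normalization~\eqref{lp:multi7}. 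All of these are polynomially many, so the dual variable vector has polynomial dimension.

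The key observation is that the exponentially many dual constraints come in two groups. The dual constraints arising from the primal variables $x^{r,k}_\theta$ are only polynomially many and can be checked directly. The exponentially many dual constraints arise from the primal variables $\phi^{\kvec}_\theta(R)$: for each fixed $\kvec$ and $\theta$, there is one dual inequality per subset $R\subseteq\rec$, and that inequality has the shape $f_\theta(R)$ (times the positive coefficient $\mu_\theta\lambda_{\kvec}$ from the objective) minus $\sum_{r\in R}$ (dual variable of constraint~\eqref{lp:multi1} for $(\kvec,r,\theta)$) minus (dual variable of constraint~\eqref{lp:multi7} for $(\kvec,\theta)$) $\le 0$, i.e. it is violated for some $R$ exactly when $\max_{R\subseteq\rec}\bigl(f_\theta(R)+\sum_{r\in R} w_r\bigr)$ exceeds a known threshold, where $w_r$ is (up to a fixed scaling by $\mu_\theta\lambda_{\kvec}$) the negative of the dual variable associated with~\eqref{lp:multi1}. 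Thus a single call to the assumed oracle, for each of the polynomially many pairs $(\kvec,\theta)$ with $\kvec\in\bar\K$, either certifies feasibility of that whole group of constraints or returns a violated one; combined with the direct check of the polynomially many remaining dual constraints, this gives a polynomial-time separation oracle for the dual.

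Having the separation oracle, I would invoke the standard equivalence between separation and optimization (Grötschel--Lovász--Schrijver): the ellipsoid method solves the dual LP in polynomially many iterations, each invoking the oracle once. From the ellipsoid run one recovers an optimal primal solution to LP~\ref{lp:multi} supported on a polynomial number of the $\phi^{\kvec}_\theta(R)$ variables --- concretely, by solving the primal LP restricted to the (polynomially many) subsets $R$ that were ever returned as violated cuts during the ellipsoid execution, which by LP duality yields the true optimum. Finally, Theorem~\ref{thm:direct} guarantees that an optimal solution of LP~\ref{lp:multi} --- consisting of the menus of direct, persuasive marginal signaling schemes encoded by the $x^{r,k}_\theta$ together with the consistent general signaling schemes $\phi^{\kvec}$ --- is an optimal sender's strategy, so outputting it completes the argument. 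The care points, rather than genuine obstacles, are: bookkeeping the dual so that the oracle's objective $f_\theta(R)+\sum_{r\in R}w_r$ matches the violated-constraint condition exactly (including signs and the $\mu_\theta\lambda_{\kvec}$ scaling, and the fact that $\lambda_{\kvec}>0$ only on $\bar\K$); ensuring the recovered $\phi^{\kvec}$ are genuine distributions over subsets consistent with the marginals (which~\eqref{lp:multi1} and~\eqref{lp:multi7} enforce); and the routine technical conditions for applying ellipsoid (a bounding ball and a lower bound on the volume of the feasible region, handled by standard perturbation arguments). The main conceptual step is recognizing that the separation problem for the dual is \emph{exactly} the optimization problem solved by the assumed oracle.
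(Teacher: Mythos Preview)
Your proposal is correct and follows essentially the same approach as the paper: dualize LP~\ref{lp:multi}, observe that the dual has polynomially many variables and that its only exponential family of constraints---those indexed by $(\kvec,\theta,R)$ and coming from the primal variables $\phi^{\kvec}_\theta(R)$---can be separated by a single call to the assumed oracle per $(\kvec,\theta)$, then apply the ellipsoid method. Your write-up is in fact more complete than the paper's, since you spell out the primal-recovery step and the standard technical preconditions for ellipsoid, which the paper leaves implicit.
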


An oracle that solves $\max_{R \subseteq \rec} f_\theta(R)+\sum_{r\in R} w_r$ can be implemented in polynomial time for supermodular and anonymous functions, as shown by~\citet{dughmi2017algorithmic}.
%
As a consequence, we obtain the following corollary.

\begin{restatable}{corollary}{corollaryOracle}
	In multi-receiver instances with supermodular or anonymous sender's utility functions, there exists a polynomial-time algorithm that computes an optimal sender's strategy.
\end{restatable}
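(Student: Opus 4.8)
The plan is to derive the corollary directly from Theorem~\ref{thm:oracle}: by that theorem it suffices to exhibit, for each state $\theta \in \Theta$, a polynomial-time algorithm solving $\max_{R \subseteq \rec} f_\theta(R) + \sum_{r \in R} w_r$ for an arbitrary weight vector $\boldsymbol{w} \in \mathbb{R}^n$. I would handle the two function classes separately and then simply plug the resulting oracle into Theorem~\ref{thm:oracle}.

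For supermodular $f_\theta$, I would observe that the map $R \mapsto \sum_{r \in R} w_r$ is modular, so that $g(R) \defeq f_\theta(R) + \sum_{r \in R} w_r$ is again supermodular, and hence $-g$ is submodular. Maximizing $g$ is therefore equivalent to minimizing the submodular set function $-g$, which can be carried out in polynomial time by any standard submodular-function-minimization routine (e.g.\ the ellipsoid-based algorithm of Gr\"otschel--Lov\'asz--Schrijver, or a purely combinatorial one). This yields the required oracle.

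For anonymous $f_\theta$, I would use the fact that $f_\theta(R)$ depends only on $|R|$, writing $f_\theta(R) = \hat f_\theta(|R|)$ for a suitable function $\hat f_\theta : \{0,1,\dots,n\} \to [0,1]$. Then
\[
\max_{R \subseteq \rec} \Big( f_\theta(R) + \sum_{r \in R} w_r \Big) = \max_{0 \le j \le n} \Big( \hat f_\theta(j) + \max_{R \subseteq \rec \,:\, |R| = j} \sum_{r \in R} w_r \Big),
\]
and the inner maximum is just the sum of the $j$ largest entries of $\boldsymbol{w}$. Sorting $\boldsymbol{w}$ in non-increasing order and taking prefix sums lets one evaluate all $n+1$ candidate values of $j$ in $O(n \log n)$ time, again giving the oracle.

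Since both implementations run in polynomial time, combining them with Theorem~\ref{thm:oracle} yields a polynomial-time algorithm computing an optimal sender's strategy, proving the corollary. I do not anticipate any real obstacle: the supermodular case reduces to the classical tractability of submodular minimization, and the anonymous case to an elementary exchange/greedy observation; these are exactly the reductions already used by~\citet{dughmi2017algorithmic}, so the only thing to be careful about is to state precisely that adding the modular term $\sum_{r\in R} w_r$ preserves (super/sub)modularity and does not affect anonymity-based decomposition.
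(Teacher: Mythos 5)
Your proposal is correct and follows essentially the same route as the paper: both reduce to Theorem~\ref{thm:oracle}, handle the supermodular case by noting that adding the modular term $\sum_{r\in R} w_r$ preserves supermodularity so that the oracle amounts to unconstrained supermodular maximization (equivalently, submodular minimization, which is polynomial-time), and handle the anonymous case by enumerating over the cardinality $|R|=j$ and taking the $j$ largest weights. Your write-up is simply a slightly more explicit version of the paper's argument, with no substantive difference.
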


\subsection{Submodular Sender's Utility}\label{sec:results_multiSub}

In this section, we show how to obtain in polynomial time a $\left( 1-\frac{1}{e} \right)$-approximation to an optimal sender's strategy in instances with submodular utility functions, modulo an additive loss $\epsilon > 0$.
%
This is the best approximation result that can be achieved in polynomial time, since, as it follows from results in the literature, it is~\NPHard\ to obtain an approximation factor better than $ 1-\frac{1}{e} $.
Indeed, if we consider settings without types, \emph{i.e.}, in which $|\K_r|=1$ for all $r \in \rec$, the problem reduces to computing an optimal signaling scheme when the sender knows receivers' utilities.
Then, in the restricted case in which there are only two states of nature, \citet{babichenko2017algorithmic} show that, for each $\epsilon>0$, it is \NPHard\ to provide a $\left(1-\frac{1}{e}+\epsilon \right)$-approximation of an optimal signaling scheme.

Then, the following theorem provides a tight approximation algorithm that runs in polynomial time.

\begin{theorem} \label{thm:submodular}
	For each $\epsilon>0$, there exists an algorithm with running time polynomial in the instance size and $\frac{1}{\epsilon}$ that returns a sender's strategy with utility at least $\left( 1- \frac{1}{e} \right) OPT-\epsilon$ in expectation, where $OPT$ is the sender's expected utility in an optimal strategy.
\end{theorem}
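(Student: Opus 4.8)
The plan is to reduce the submodular case to a continuous optimization problem over the marginal signaling schemes, solve it approximately with a continuous greedy algorithm, and then round. First I would use Theorem~\ref{thm:direct} to restrict attention to menus of direct and persuasive marginal signaling schemes, so that a sender's strategy is described by the variables $x^{r,k}_\theta \in [0,1]$ (the probability of recommending $a_1$ to receiver $r$ of reported type $k$ in state $\theta$) subject to the IC and persuasiveness constraints of LP~\ref{lp:multi}, together with a consistency requirement that the joint scheme $\phi^{\kvec}$ marginalizes correctly. The key observation is that, once the marginals $x^{r,k_r}_\theta$ are fixed, the sender's expected utility for a type profile $\kvec$ in state $\theta$ is $\max$ over all couplings of the product-ish marginals of $\Expec[f_\theta(R)]$; since $f_\theta$ is monotone submodular, the best achievable value of $\Expec_{R}[f_\theta(R)]$ given marginal probabilities $(x^{r,k_r}_\theta)_{r}$ is, up to the $1-1/e$ factor, captured by the \emph{multilinear extension} $F_\theta$ of $f_\theta$ evaluated at the marginal vector. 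Concretely, taking the \emph{independent} coupling already gives exactly $F_\theta\big((x^{r,k_r}_\theta)_r\big)$, and one cannot do better than $f_\theta$ evaluated optimally; the continuous greedy will lose the $1-1/e$ factor against the true combinatorial optimum but the relaxation itself is exact for the independent rounding we will use.

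So the second step is to write the relaxed program: maximize $\sum_{\theta}\mu_\theta \sum_{\kvec \in \bar\K}\lambda_\kvec\, F_\theta\big((x^{r,k_r}_\theta)_r\big)$ subject to the linear IC/persuasiveness constraints (\ref{lp:multi2})--(\ref{lp:multi6}) and $x \in [0,1]^{\cdot}$. This is a linearly-constrained problem whose feasible region $P$ is a polytope of polynomial dimension (the number of $x$-variables is $\sum_r m_r \cdot d$), and whose objective is a nonnegative combination of multilinear extensions of monotone submodular functions — hence itself ``smooth submodular'' in the sense required by the continuous greedy framework of Calinescu--Chekuri--Pál--Vondrák. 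I would invoke their result: continuous greedy run over the \emph{polytope} $P$ (not just a matroid; what one needs is a polynomial-time linear-optimization oracle over $P$, which we have since $P$ has polynomially many explicit linear constraints) produces in time $\poly(\text{size},1/\epsilon)$ a point $x^\star \in P$ with $\sum_\theta \mu_\theta\sum_\kvec \lambda_\kvec F_\theta(\cdot) \ge (1-1/e)\,\mathrm{OPT}_{\mathrm{rel}} - \epsilon$, where $\mathrm{OPT}_{\mathrm{rel}}$ is the optimum of the relaxed program. One checks $\mathrm{OPT}_{\mathrm{rel}} \ge \mathrm{OPT}$ because any optimal integral strategy's marginals are feasible for $P$ and its utility equals $\sum_\theta\mu_\theta\sum_\kvec\lambda_\kvec \Expec_{R\sim\phi^\kvec_\theta}[f_\theta(R)] \le \sum_\theta\mu_\theta\sum_\kvec\lambda_\kvec \max\{\Expec[f_\theta(R)] : \text{marginals }x\} $, and the latter is $\le$ the multilinear value plus nothing — actually I must argue the reverse inequality directly, namely that the multilinear value is an \emph{upper bound}; this follows since $F_\theta(x) = \Expec[f_\theta(R)]$ for independent $R$ with marginals $x$, and for a submodular $f_\theta$ the \emph{correlation gap} says this independent value is within $1-1/e$ of the best correlated coupling, so $\mathrm{OPT} \le \frac{e}{e-1}\mathrm{OPT}_{\mathrm{rel}}$, i.e. $(1-1/e)\mathrm{OPT} \le \mathrm{OPT}_{\mathrm{rel}}$.

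The third step is the rounding: given $x^\star \in P$, define, for each reported type profile $\kvec$ and state $\theta$, the joint signaling scheme $\phi^{\kvec}_\theta$ that recommends $a_1$ to each receiver $r$ \emph{independently} with probability $x^{\star\,r,k_r}_\theta$. This is a valid distribution over $R \subseteq \rec$ (Constraint~\ref{lp:multi7}), its marginals satisfy $\sum_{R\ni r}\phi^\kvec_\theta(R)=x^{\star\,r,k_r}_\theta$ (Constraint~\ref{lp:multi1}), and since $x^\star$ satisfies (\ref{lp:multi2})--(\ref{lp:multi6}) the resulting menus of marginals are IC and persuasive. The expected sender utility of this strategy is exactly $\sum_\theta \mu_\theta \sum_\kvec \lambda_\kvec \Expec_{R \text{ indep.}}[f_\theta(R)] = \sum_\theta \mu_\theta \sum_\kvec \lambda_\kvec F_\theta\big((x^{\star\,r,k_r}_\theta)_r\big) \ge (1-1/e)\mathrm{OPT} - \epsilon$, which is the claim. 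I would also note that $\phi^\kvec_\theta$ has a polynomial-size description via its independent marginals (we never materialize the $2^n$-dimensional vector), keeping everything polynomial.

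The main obstacle I expect is justifying that the continuous greedy algorithm applies cleanly over a \emph{general} polytope $P$ defined by the IC/persuasiveness constraints, rather than a matroid or down-closed polytope. The standard guarantee $(1-1/e)$ for continuous greedy is usually stated for monotone submodular maximization over a \emph{down-closed, solvable} polytope; here $P$ is solvable (polynomial linear-optimization oracle) but \emph{not} obviously down-closed because the persuasiveness constraints (\ref{lp:multi5})--(\ref{lp:multi6}) and the IC constraints mix upper and lower bounds on the $x$-variables. The fix will be either (i) to observe that the objective is monotone nondecreasing in $x$ (each $F_\theta$ is, since $f_\theta$ is monotone) so we only ever want to \emph{increase} coordinates, and to run continuous greedy taking, at each step, a linear-optimization step inside $P$ (the Frank--Wolfe-style update stays feasible because $P$ is convex and the update is a convex combination of feasible points) — this variant of continuous greedy works over any solvable polytope for monotone objectives — or (ii) to argue via the ``multilinear relaxation + swap rounding over the polytope'' machinery that tolerates non-down-closed constraints. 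Getting this feasibility-preservation argument exactly right, together with carefully tracking the additive $\epsilon$ through the discretization of the continuous greedy ODE, is where the real work lies; the rest is bookkeeping with the correlation gap bound.
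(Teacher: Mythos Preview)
Your argument has a genuine gap that costs you an extra $(1-1/e)$ factor. You correctly observe that the multilinear extension $F_\theta$ evaluated at the marginals is \emph{not} an upper bound on the optimal coupling: for submodular $f_\theta$ the concave closure $f^+_\theta(x)=\max\{\Expec[f_\theta(R)]:\text{marginals }x\}$ can exceed $F_\theta(x)$, and the correlation gap only gives $F_\theta(x)\ge(1-\tfrac{1}{e})f^+_\theta(x)$. Hence your relaxation satisfies $\mathrm{OPT}_{\mathrm{rel}}\ge(1-\tfrac{1}{e})\,\mathrm{OPT}$, \emph{not} $\mathrm{OPT}_{\mathrm{rel}}\ge\mathrm{OPT}$. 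When you then apply continuous greedy to get $F(x^\star)\ge(1-\tfrac{1}{e})\,\mathrm{OPT}_{\mathrm{rel}}-\epsilon$ and round independently (which realizes exactly $F(x^\star)$), you end up with $(1-\tfrac{1}{e})^2\,\mathrm{OPT}-\epsilon\approx 0.40\cdot\mathrm{OPT}$, not $(1-\tfrac{1}{e})\,\mathrm{OPT}$. Your final chain silently substitutes $\mathrm{OPT}$ for $\mathrm{OPT}_{\mathrm{rel}}$ and loses the second factor.

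The paper avoids this by \emph{not} relaxing to a single multilinear evaluation at the marginals. Instead it introduces auxiliary variables $x^{j,\kvec,\theta}\in[0,1]^n$ for $j\in[q]$ and uses the objective $\frac{1}{q}\sum_{j}F_\theta(x^{j,\kvec,\theta})$, i.e., a \emph{mixture} of $q$ multilinear evaluations subject to the constraint that the average marginal $\frac{1}{q}\sum_j x^{j,\kvec,\theta}_r$ is at most $x^{r,k_r}_\theta$. Because any signaling scheme $\phi^{\kvec}_\theta$ with bounded support can be approximated by a $q$-uniform distribution over subsets (Lemma~\ref{lm:subUniform}), this relaxation is tight up to an additive $\epsilon$: its optimum is $\ge\mathrm{OPT}-\epsilon$, with no correlation-gap loss. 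Continuous greedy then contributes the single $(1-\tfrac{1}{e})$ factor. A separate rounding lemma (Lemma~\ref{lm:subDirect}) is needed to turn the fractional $x^{j,\kvec,\theta}$ back into an explicit, polynomially-supported signaling scheme while preserving the marginals exactly (so IC and persuasiveness stay intact). As a side benefit, relaxing the marginal-consistency constraint to an inequality makes the polytope down-monotone in the variables the objective actually depends on, which cleanly handles the feasibility concern you raise at the end; your proposed fix (i) for a non-down-closed $P$ is plausible for monotone objectives, but it is not the issue that breaks your bound.
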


In order to prove the result, we reduce the problem of computing the desired (approximate) sender's strategy to solving the following linearly-constrained mathematical program (Program~\ref{lp:multiSub}).
The program exploits the fact that, as we will show next, there always exists an ``almost'' optimal sender's strategy in which the sender employs signaling schemes $\phi^{\kvec}$ (for $\kvec \in \bar \K$) such that the distributions $\phi^{\kvec}_\theta$ are $q$-uniform over the set $2^{\rec}$. 
In particular, we say that a distribution is $q$-uniform if it follows a uniform distribution on a multiset of size $q$, where we denote by $[q]$ the set $\{1,\dots,q\}$.
%
%
Then, the mathematical program reads as follows.
\begin{subequations} \label{lp:multiSub}
\begin{align}
\max_{x} & \,\,  \sum_{\theta \in \Theta} \mu_\theta \sum_{\kvec \in \bar \K} \lambda_\kvec \frac{1}{q} \sum_{j \in [q]} F_\theta \left( x^{j,\kvec,\theta} \right) \quad \text{s.t.}\\
&\hspace{-0.3cm}\sum_{j \in [q]} \frac{1}{q} x^{j,\kvec,\theta}_{r}\le x^{r,k_r}_\theta \hspace{1.45cm} \forall r \in\rec, \forall \kvec \in \bar \K,\forall \theta \in \Theta \label{lp:multiRelaxed}\\ 
&\hspace{-0.3cm}\sum_{\theta \in \Theta} \mu_\theta x^{r,k}_\theta u^{r,k}_\theta(a_1) +\sum_{\theta\in \Theta} \mu_\theta \left( 1-x^{r,k}_\theta \right) u^{r,k}_\theta(a_0) \nonumber\\
&\hspace{-0.3cm}\hspace{0.5cm}\ge l^{k,k'}_{a_1}+l^{k,k'}_{a_0} \hspace{2.2cm} \forall r \in \rec, \forall k\neq k' \in \K_r \label{lp:multiSub1}\\
&\hspace{-0.3cm}l^{k,k'}_{a_1} \ge \sum_{\theta\in \Theta} \mu_\theta x^{r,k'}_\theta u^{r,k}_\theta(a) \nonumber \\
&\hspace{-0.3cm}\hspace{3.35cm}\forall r \in \rec, \forall a \in \A_r, \forall k \neq k' \in \K_r \label{lp:multiSub2} \\
&\hspace{-0.3cm}l^{k,k'}_{a_0} \ge \sum_{\theta\in \Theta} \mu_\theta \left( 1-x^{r,k'}_\theta \right) u^{r,k}_\theta(a) \nonumber\\
&\hspace{-0.3cm}\hspace{3.35cm}\forall r \in \rec, \forall a \in \A_r , \forall k\neq k' \in \K_r \label{lp:multiSub3}\\
&\hspace{-0.3cm}\sum_{\theta\in \Theta} \mu_\theta x^{r,k}_\theta \left[ u^{r,k}_\theta(a_1) -u^{r,k}_\theta(a_0) \right] \ge 0 \hspace{0.65cm}\forall r \in \rec,k \in \K_r \label{lp:multiSub4}\\
&\hspace{-0.3cm}\sum_{\theta\in \Theta} \hspace{-0.5mm} \mu_\theta \hspace{-0.5mm} \left( \hspace{-0.5mm} 1 \hspace{-0.5mm} -x^{r,k}_\theta \right) \hspace{-0.1cm}\left[ u^{r,k}_\theta(a_0) \hspace{-0.5mm} -u^{r,k}_\theta(a_1) \right] \hspace{-0.5mm} \ge 0 \hspace{0.01cm}\forall r \in \rec, \forall k \in \K_r \label{lp:multiSub5}\\
&\hspace{-0.3cm}0 \le x^{r,k}_\theta \le 1 \hspace{2.5cm} \forall r \in \rec, \forall k \in \K_r, \forall \theta \in \Theta \label{lp:multiSub6}\\
&\hspace{-0.3cm}0 \le x^{j,\kvec,\theta}_r \le 1 \hspace{1.1cm} \forall j \in [q], \forall r \in \rec, \forall k \in \K_r, \forall \theta \in \Theta. \label{lp:multiSub7}
\end{align}
\end{subequations}
In Program~\ref{lp:multiSub}, each variable $x^{r,k}_\theta$ represents the probability $\phi^k_\theta(a_1)$ that the sender recommends action $a_1$ to receiver $r \in \rec$ of type $k \in \K_r$ in state $\theta \in \Theta$.
Constraints~\eqref{lp:multiSub1}--\eqref{lp:multiSub6} force the marginal signaling schemes to be well defined, where Constraints~\eqref{lp:multiSub1},~\eqref{lp:multiSub2},~and~\eqref{lp:multiSub3} encode the IC conditions, Constraints~\eqref{lp:multiSub4}~and~\eqref{lp:multiSub5} ensure the persuasiveness property, and Constraints~\eqref{lp:multiSub6} require the marginal signaling schemes to be feasible, \emph{i.e.}, $\phi^{r,k}_\theta(a_1)+\phi^{r,k}_\theta(a_0)=1$ and $\phi^{r,k}_\theta(a)\ge0$ for every $a \in \{a_0,a_1\}$.
Moreover, the program uses variables $x^{j,\kvec,\theta}_r \in \{0,1\}$ to represent whether the recommended action to receiver $r \in \rec$ is $a_1$ or $a_0$ in the $j$-th action profile in the support of $\phi^{\kvec}_\theta$.
Notice that we relaxed these variables to $x^{j,\kvec,\theta}_{r}\in[0,1]$ and use the multi-linear extension of the sender's utility functions $f_\theta$, which, for every $\theta \in \Theta$, reads as
\[
	F_\theta(x) \coloneqq \sum_{R \subseteq \rec} f_\theta(R)  \prod_{r \in R}{x_r} \prod_{r \notin R}(1- x_r).
\]
%
Moreover, we also relax the constraints ensuring the consistency of the marginal signaling schemes, namely Constraints~\eqref{lp:multiRelaxed}, by replacing the condition $\sum_{j \in [q]} \frac{1}{q} x^{j,\kvec,\theta}_{r}= x^{r,k_r}_\theta $ for all $ r \in \rec,\kvec \in \bar \K,\theta \in \Theta$ with $\sum_{j \in [q]} \frac{1}{q} x^{j,\kvec,\theta}_{r}\le x^{r,k_r}_\theta $ for all $ r \in \rec,\kvec \in  \bar \K,\theta \in \Theta$.

In order to reduce the problem of computing the desired sender's strategy to solving Program~\ref{lp:multiSub}, we need the following two lemmas (Lemma~\ref{lm:subUniform} and Lemma~\ref{lm:subDirect}).
We show that the value of Program~\ref{lp:multiSub} for a suitably-defined $q$ approximates the value of an optimal sender's strategy (\emph{i.e.}, an optimal solution to LP~\ref{lp:multi}) and that, given a solution to Program~\eqref{lp:multiSub}, we can recover a sender's strategy with approximately the same expected utility for the sender.
Our result is related to those in~\citep{dughmi2017algorithmic} for the case without types.
However, \citet{dughmi2017algorithmic} use a probabilistic method to show the existence of an ``almost'' optimal signaling scheme that uses $q$-uniform distributions over the signals.
This approach cannot be applied to our problem since it slightly modifies the receivers' utilities.
In the case of persuasiveness constraints, they show how to maintain feasibility.
However, this approach does \emph{not} work for the IC constraints.
We propose a different technique based on the fact that LP~\ref{lp:multi} has a polynomial number of constraints.
Let $\beta$ be the number of constraint of LP~\ref{lp:multiSub}.
Notice that $\beta$ is polynomial in the size of the LP.
We show that, for each $\epsilon>0$, there exist a $q$ such that LP~\ref{lp:multiSub} has value at least $OPT-\epsilon$, where $OPT$ is the value of an optimal sender' strategy.

\begin{restatable}{lemma}{lemmaSubUniform}\label{lm:subUniform}
	For each $\epsilon>0$, the optimal value of Program~\ref{lp:multiSub} with $q=\left\lceil \frac{\beta}{\epsilon} \right\rceil$ is at least $OPT-\epsilon$, where $OPT$ is the value of an optimal sender's strategy and $\beta$ is the number of constraints of LP~\ref{lp:multi}.
\end{restatable}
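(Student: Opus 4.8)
The plan is to start from an optimal solution of LP~\ref{lp:multi}, whose value equals $OPT$ by Theorem~\ref{thm:direct}, and to convert it into a feasible point of Program~\ref{lp:multiSub} whose objective is at least $OPT-\epsilon$. The key structural observation is that the (exponentially many) signaling-scheme variables $\phi^\kvec_\theta$ are coupled to the rest of LP~\ref{lp:multi} only through the consistency Constraints~\eqref{lp:multi1} and the normalization~\eqref{lp:multi7}: the IC Constraints~\eqref{lp:multi2}--\eqref{lp:multi4} and the persuasiveness Constraints~\eqref{lp:multi5}--\eqref{lp:multi6} involve only the marginal variables $x^{r,k}_\theta$ and the auxiliary variables $l^{r,k,k'}_{a}$. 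Hence I would keep the $x$ and $l$ components of the optimal solution \emph{unchanged}: this immediately makes Constraints~\eqref{lp:multiSub1}--\eqref{lp:multiSub6} of Program~\ref{lp:multiSub} satisfied (they coincide with Constraints~\eqref{lp:multi2}--\eqref{lp:multi6}, and $x^{r,k}_\theta\in[0,1]$ follows from~\eqref{lp:multi1} and~\eqref{lp:multi7}). It then only remains to replace each distribution $\phi^\kvec_\theta$ over $2^\rec$ by a $q$-uniform distribution (encoded by the $0/1$ vectors $x^{j,\kvec,\theta}$) so that the relaxed consistency Constraints~\eqref{lp:multiRelaxed} hold and the objective drops by at most $\epsilon$.

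For the construction I would first take an optimal \emph{vertex} solution of LP~\ref{lp:multi} (its feasible region is bounded and nonempty, so one exists). Since LP~\ref{lp:multi} has only $\beta$ constraints, a basic solution has at most $\beta$ nonzero variables in total, and in particular $\sum_{\theta\in\Theta}\sum_{\kvec\in\bar\K}|\supp(\phi^\kvec_\theta)|\le\beta$; this is precisely where the polynomial bound on the number of constraints of LP~\ref{lp:multi} enters, and it is what lets us avoid the probabilistic/sampling argument of~\citep{dughmi2017algorithmic}, which would perturb the receivers' utilities and break the IC constraints. For each $\kvec\in\bar\K$, $\theta\in\Theta$, write $\supp(\phi^\kvec_\theta)=\{R^1,\dots,R^t\}$ with probabilities $p_1,\dots,p_t$; I would build the $q$-uniform distribution $\tilde\phi^\kvec_\theta$ that takes $R^i$ with multiplicity $n_i\defeq\lfloor p_i q\rfloor$ and takes $\varnothing$ with multiplicity $q-\sum_i n_i\ge 0$, and then set $x^{j,\kvec,\theta}$ equal to the indicator vectors of the $q$ sets in this multiset. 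Two facts make this work. First, rounding \emph{down} gives, for every receiver $r$, $\frac1q\sum_{i:\,r\in R^i}n_i\le\sum_{i:\,r\in R^i}p_i=x^{r,k_r}_\theta$ (using Constraint~\eqref{lp:multi1}), so Constraints~\eqref{lp:multiRelaxed} are satisfied — this is exactly why~\eqref{lp:multiRelaxed} was relaxed from an equality to a ``$\le$''. Second, $f_\theta(\varnothing)=0$ means the empty-set padding contributes nothing to either the marginals or the objective.

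For the objective accounting I would use that $F_\theta$ evaluated at a $0/1$ vector equals $f_\theta$ of the corresponding set, so $\frac1q\sum_{j\in[q]}F_\theta(x^{j,\kvec,\theta})=\frac1q\sum_i n_i f_\theta(R^i)\ge\sum_i p_i f_\theta(R^i)-\frac{t}{q}$, using $n_i\ge p_i q-1$ and $f_\theta\le 1$. Averaging over $\theta\sim\muvec$ and $\kvec\sim\lambda$, and using $\sum_{\theta,\kvec}|\supp(\phi^\kvec_\theta)|\le\beta$ together with $\mu_\theta,\lambda_\kvec\le 1$, the objective of the constructed point is at least $OPT-\beta/q\ge OPT-\epsilon$ when $q=\lceil\beta/\epsilon\rceil$, which proves the lemma. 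I expect the only genuinely delicate points to be the two flagged above — matching the rounding direction to the relaxed consistency constraint (handled cleanly by flooring) and getting a clean, self-contained bound on the total support of the $\phi$-part of an optimal solution together with correctly threading the weights $\mu_\theta,\lambda_\kvec$ through the loss estimate. The support bound via a basic optimal solution is the conceptual crux, but it is short; the rest is bookkeeping.
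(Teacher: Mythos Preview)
Your proposal is correct and follows essentially the same approach as the paper: take a basic optimal solution of LP~\ref{lp:multi} so that the total support of the $\phi$-variables is at most $\beta$, keep the marginal variables $x^{r,k}_\theta$ unchanged, round each $\phi^\kvec_\theta(R)$ down to a multiple of $1/q$, pad with $\varnothing$, and bound the loss by $\beta/q$. The only cosmetic difference is in the loss accounting: the paper uses the weaker per-pair bound $|\supp(\phi^\kvec_\theta)|\le\beta$ and loses $\epsilon$ for each $(\kvec,\theta)$ before averaging, whereas you use the sharper global bound $\sum_{\kvec,\theta}|\supp(\phi^\kvec_\theta)|\le\beta$ together with $\mu_\theta\lambda_\kvec\le 1$; both yield the same $OPT-\epsilon$ conclusion.
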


Then, we show how to obtain a signaling scheme given a solution of Program~\ref{lp:multiSub}.
\citet{dughmi2017algorithmic} build a signaling scheme by using a technique whose generalization to our setting works as follows.
Given a state of nature $\theta \in \Theta$ and a vector of types $\kvec \in \bar \K$, it selects a $j \in [q]$ uniformly at random and recommends signal $a_1$ to receiver $r \in \rec$ with probability $x^{j,\kvec,\theta}_r$, while it recommends $a_0$ otherwise.
By definition of multi-linear extension, using this technique the sender achieves expected utility equal to the value of the given solution to Program~\ref{lp:multiSub}.
However, this signaling scheme uses an exponential number of signal profiles, and, thus, it cannot be represented explicitly.
In the following lemma, we show how to obtain a sender's strategy in which signaling schemes use a polynomial number of signal profiles.

\begin{restatable}{lemma}{lemmaSubDirect} \label{lm:subDirect}
	Given a solution to Program~\ref{lp:multiSub} with value $APX$, for each $\iota>0$, there exists an algorithm with running time polynomial in the instance size and $\iota$ that returns a sender's strategy with utility at least $APX-\frac{n}{q}-\iota$ in expectation. Moreover, such sender's strategy employs signaling schemes using polynomially-many signal profiles.
\end{restatable}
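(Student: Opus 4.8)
The plan is to treat a solution of Program~\ref{lp:multiSub} as two almost-independent pieces: the menu marginals $x^{r,k}_\theta$ and, for each reported profile $\kvec \in \bar \K$ and state $\theta \in \Theta$, the $q$ fractional recommendation profiles $x^{1,\kvec,\theta},\dots,x^{q,\kvec,\theta} \in [0,1]^n$. The menu part is \emph{already} a valid object: Constraints~\eqref{lp:multiSub1}--\eqref{lp:multiSub5} coincide with the IC and persuasiveness constraints of LP~\ref{lp:multi}, so the menu of marginal signaling schemes defined by $\phi^{r,k}_\theta(a_1)=x^{r,k}_\theta$ is IC and persuasive no matter how we later build the joint schemes. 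Hence the entire task reduces to constructing, for each $(\kvec,\theta)$, a joint signaling scheme $\phi^\kvec_\theta \in \Delta_{2^\rec}$ that (i) has polynomially many signal profiles in its support, (ii) is \emph{consistent} with the menu, i.e. its marginal probability of recommending $a_1$ to receiver $r$ equals exactly $x^{r,k_r}_\theta$ (this is what makes persuasiveness and IC carry over), and (iii) has sender value $\sum_{R\subseteq\rec}\phi^\kvec_\theta(R) f_\theta(R)$ close to $\frac1q\sum_{j\in[q]}F_\theta(x^{j,\kvec,\theta})$, which by Lemma~\ref{lm:subUniform} is at least $OPT-\epsilon$. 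The product-sampling idea recalled before the lemma matches (iii) and the marginals \emph{in expectation}, but produces exponentially many profiles and only guarantees marginals $\tfrac1q\sum_j x^{j,\kvec,\theta}_r \le x^{r,k_r}_\theta$ by the relaxed constraint~\eqref{lp:multiRelaxed}; both defects must be repaired.

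For the support, I would round the $q$ fractional profiles to $q$ integral profiles $R_1,\dots,R_q\subseteq\rec$, yielding a $q$-uniform joint scheme with support at most $q$. The rounding must be chosen so that, \emph{within each slot} $j$, the indicator bits across receivers stay independent with means $x^{j,\kvec,\theta}_r$; then the multilinear-extension identity gives $\mathbb{E}[f_\theta(R_j)]=F_\theta(x^{j,\kvec,\theta})$, so the expected sender value is exactly $\frac1q\sum_j F_\theta(x^{j,\kvec,\theta})\ge APX$. Simultaneously, I would round \emph{across slots}, one receiver at a time, with a dependent rounding (pipage/swap-rounding style) that keeps each receiver's rounded count $|\{j:r\in R_j\}|$ within one unit of its fractional sum $\sum_j x^{j,\kvec,\theta}_r$. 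This guarantees that the realized marginal $\hat m_r=\tfrac1q|\{j:r\in R_j\}|$ overshoots the intended value $m_r=\tfrac1q\sum_j x^{j,\kvec,\theta}_r$ by at most $1/q$.

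The consistency repair, and the origin of the $\tfrac{n}{q}$ term, is the following. By~\eqref{lp:multiRelaxed} we have $m_r\le x^{r,k_r}_\theta$, so $\hat m_r\le x^{r,k_r}_\theta+\tfrac1q$. I then push every receiver's marginal to exactly $x^{r,k_r}_\theta$: if $\hat m_r < x^{r,k_r}_\theta$, I \emph{add} $a_1$-recommendations to $r$, which only weakly increases the sender value because each $f_\theta$ is monotone non-decreasing; if $\hat m_r > x^{r,k_r}_\theta$ (by at most $1/q$), I \emph{remove} $a_1$-recommendations, which lowers the value by at most the removed probability mass, hence by at most $1/q$ since $f_\theta\le 1$. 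Summing the downward corrections over the $n$ receivers costs at most $\tfrac{n}{q}$, and each correction splits one profile, so the support grows only to $q+n$, still polynomial. After this step the joint scheme has marginals exactly equal to the menu marginals, so consistency holds and the full object---menu plus joint schemes---is a bona fide sender's strategy with expected value at least $APX-\tfrac{n}{q}$.

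Finally, to turn the in-expectation guarantee into a concrete strategy computable in time polynomial in the instance size and $1/\iota$, I would draw $\mathrm{poly}(1/\iota)$ independent runs (or samples from the rounded distribution), keep the empirical distribution, and use a Hoeffding bound on the bounded quantities $f_\theta(\cdot)\in[0,1]$ to argue that the realized value and the realized per-receiver marginals deviate from their expectations by at most $\iota$; the residual marginal error is then cleaned up by the same monotone add/remove correction, absorbing it into the additive $\iota$. The delicate point---and the main obstacle---is precisely the tension resolved in the second and third paragraphs: polynomial support forces (nearly) integral, few-atom schemes, which necessarily perturbs the receivers' marginals, yet exact consistency with the \emph{fixed} menu marginals is what lets the already-verified IC/persuasiveness constraints transfer to the joint schemes. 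Monotonicity of $f_\theta$ is the lever that makes upward marginal corrections free and caps each downward correction at $1/q$, and designing a single rounding that both preserves the multilinear objective in expectation and keeps every receiver's marginal within $1/q$ of its target is the technical heart of the argument.
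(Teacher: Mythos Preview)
Your approach is correct and reaches the same conclusion as the paper, but via a genuinely different rounding step. Both proofs first observe that the menu marginals $x^{r,k}_\theta$ already satisfy all IC and persuasiveness constraints, so the whole task is to round, for each $(\kvec,\theta)$, the $q$ fractional profiles $x^{j,\kvec,\theta}$ to a small-support distribution over $2^\rec$ whose per-receiver marginals match $x^{r,k_r}_\theta$ exactly, with monotonicity of $f_\theta$ handling the final marginal fix-up. The paper rounds \emph{deterministically}: for each receiver $r$ it estimates the partial derivatives $\partial F_\theta(x^{j,\kvec,\theta})/\partial x^{j,\kvec,\theta}_r$ by sampling, sorts the slots by estimated derivative, and pushes all of $r$'s mass to the top slots (multilinearity of $F_\theta$ makes this weakly improve the true objective); after processing all receivers at most $n$ slots remain fractional and are dropped at cost $n/q$, with the sampling error contributing the additive $\iota$. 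You instead round \emph{randomly}: for each receiver independently, a marginal-preserving, degree-preserving dependent rounding across the $q$ slots gives (i) within-slot independence across receivers and the correct means, hence $\mathbb{E}[f_\theta(R_j)]=F_\theta(x^{j,\kvec,\theta})$ exactly, and (ii) an overshoot of at most $1/q$ per receiver, so the monotone fix-up costs at most $n/q$. Your argument therefore already yields expected value at least $APX-n/q$ with no derivative estimation and no $\iota$ loss, which is slightly stronger than the lemma requires; your final Hoeffding/multiple-runs paragraph is unnecessary for the ``in expectation'' guarantee the lemma asks for and can be dropped.
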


\begin{proof}
	%
	Let $x$ be a solution to Program~\ref{lp:multiSub} with value $APX$.
	Next, we show how to obtain the desired sender's strategy.

	First, we build a new solution to Program~\ref{lp:multiSub} such that Constraints~\eqref{lp:multiRelaxed} are satisfied with equality.
	Since the functions $F_\theta$ are monotonic, we can simply obtain such solution 
	%
	by increasing the values of variables $x^{j,\kvec,\theta}_r$.
	It is easy to see that, by the monotonicity of $F_\theta$, the objective function does \emph{not} decrease.

	Then, we obtain an ``almost binary'' solution by applying, for every $\theta \in \Theta$ and $\kvec \in \bar \K$, the procedure outlined in Algorithm~\ref{alg:lemma3}.
	An a first operation, the algorithm iterates over the receivers, doing the operations described in the following for each receiver $r \in \rec$.
	
	For each $j \in [q]$, the algorithm computes an estimate of the following partial derivative  
	\begin{align*}
		&\frac{\partial F_\theta\big( x^{j,\kvec,\theta} \big)}{\partial x^{j,\kvec,\theta}_r} \hspace{-0.1cm} = \\
		&\hspace{0.5cm} \sum_{R\subseteq \rec\setminus \{r\}} \hspace{-0.2cm} \big[ f_\theta(R\cup\{r\})-f_\theta(R) \big] \prod_{r' \in R}  {x^{j,\kvec,\theta}_r} \hspace{-2mm}\prod_{r' \notin R, r' \neq r} \left( 1- x^{j,\kvec,\theta}_r \right), 
	\end{align*}
	This is accomplished by drawing $\sigma=\frac{-8}{\iota^2} n^2 \log \frac{p}{2}$ samples of the random variable $f_\theta(\tilde R \cup \{r\})-f_\theta(\tilde R)$ (with $p=\frac{\iota}{2|\bar \K| d q n}$), where $\tilde R \subseteq \rec$ is obtained by randomly picking each receiver $r'\in \rec : r'\neq r$ independently with probability $x^{j,\kvec,\theta}_{r'}$.
	%
	It is easy to see that the expected value of the random variable is exactly equal to value of the partial derivative above.
	Letting $\tilde e^{j,\kvec,\theta}_{r}$ be the empirical mean of the samples,
	%
	by an Hoeffding bound, we get
	\[
		\Pr \left\{ \, \left| \tilde e^{j,\kvec,\theta}_r- \frac{\partial F_\theta \big( x^{j,\kvec,\theta} \big)}{\partial x^{j,\kvec,\theta}_r} \right| \ge \frac{\iota}{4n} \right\} \le p.
	\]
	Moreover, consider the event $\mathcal{E}$ in which $\Big| \tilde e^{j,\kvec,\theta}_r- \frac{\partial F_\theta \big( x^{j,\kvec,\theta} \big)}{\partial x^{j,\kvec,\theta}_r} \Big| \le \frac{\iota}{4n}$ for all $ j \in [q],\kvec \in \bar \K,\theta \in \Theta,$ and $r \in \rec$.
	By a union bound, the event $\mathcal{E}$ holds with probability at least $1 - p |\bar \K| d q n$.

	%
	As a second step, the algorithm re-labels the indexes so that, if $j < j' $, then $\tilde e^{j,\kvec,\theta}_{r} \geq \tilde e^{j',\kvec,\theta}_{r}$.
	%
	Notice that the value of the partial derivative with respect to $x^{j,\kvec,\theta}_r$ does \emph{not} depend on its value.
	Hence, given two indexes $j<j'$, by ``moving'' a value $t$ from $x^{j',\kvec,\theta}_r$ to $x^{j,\kvec,\theta}_r$, the sum $\sum_{j \in [q]} F_\theta \big( x^{j,\kvec,\theta}_r \big)$ decreases at most of 
	\begin{align*}
    &	t \left(  \frac{\partial F_\theta(x^{j',\kvec,\theta})}{\partial x^{j',\kvec,\theta}_r}-\frac{\partial F_\theta(x^{j,\kvec,\theta})}{\partial x^{j,\kvec,\theta}_r} \right) \le t \left( \frac{\iota}{2n} +\tilde{e}^{j',\kvec,\theta}_r-\tilde{e}^{j,\kvec,\theta}_r \right)\le t \frac{\iota}{2n}.
	\end{align*}
	%
	Let $Q^{\kvec,\theta,r} = \left\{ 1,\dots, j^* \right\}$ be the set of the $j^* = \left\lfloor\sum_{j \in[q]} x^{j,\kvec,\theta}_r \right\rfloor$ smallest indexes in $[q]$.
	Then, the algorithm updates the solution $x$ by setting $ x_r^{j,\kvec,\theta}=1$ for all indexes $j \in Q^{\kvec,\theta,r}$ and setting 
	\[
		 x_r^{j^*,\kvec,\theta}=\sum_{j' \in Q^{\kvec,\theta,r}} x^{j',\kvec,\theta}_r- \left\lfloor\sum_{j' \in Q^{\kvec,\theta,r}} x^{j',\kvec,\theta}_r \right\rfloor.
	\]
	%
	%
	
	After having iterated over all the receivers, the algorithm has built a new feasible solution $\bar x$ to Program~\ref{lp:multiSub} such that \[
		\sum_{j \in [q]} \left[ F_\theta(\bar x^{j,\kvec,\theta}_r)- F_\theta(x^{j,\kvec,\theta}_r) \right]\ge -q\iota/2,
	\]
	since the algorithm moved at most a value $q$ from variables indexed by $j'$ to variables indexed by$j<j'$.
	Moreover, each receiver $r \in \rec$ has at most a non-binary element among variables $\bar x^{j,\kvec,\theta}_r$.

	As a final step, the algorithm first builds a set $Q^{\kvec,\theta}$ of indexes $j \in [q]$ such that $\bar x^{j,\kvec,\theta}$ is a binary vector.
	Notice that there always exists one such set $Q^{\kvec,\theta}$ of size at least $q-n$.
	%
	%
	Then, the algorithm constructs a signaling scheme such that
	\[
		\phi^\kvec_\theta(R)=\frac{1}{q} \left|\left \{j \in Q^{\kvec,\theta} : x^{j,\kvec,\theta}_r=1\forall r \in R, x^{j,\kvec,\theta}_r=0\forall r \notin R\right\}\right|.
	\]
	Notice that $\sum_{R \in \rec: r \in R }\phi^\kvec_\theta(R)\le x^{r,k_r}$ and, by the monotonicity assumption on $f_\theta$, it is easy to build a signaling scheme such that $\sum_{R \in \rec: r \in R } \phi^\kvec_\theta(R)= x^{r,k_r}$ with greater sender's expected utility.
	Finally, the algorithm outputs the sender's strategy made by $\{\phi^\kvec\}_{\kvec \in \bar \K}$ and $\{x^{r,k}\}_{r \in \rec,k \in \K_r}$, where the menus of marginal signaling schemes are those given as input.

	To conclude the proof, we show that the  utility of the sender's strategy described above is at least $APX-\frac{n}{q}-\iota$ in expectation.
	If the event $\mathcal{E}$ holds, the utility of the solution is at least 
	\begin{align*}
		\sum_{\theta \in \Theta} \mu_\theta \sum_{\kvec \in \bar \K}  \lambda_\kvec &\phi^\kvec_\theta(R) f_\theta(R) \\
		&\ge \sum_{\theta \in \Theta} \mu_\theta \sum_{\kvec \in \bar \K}  \lambda_\kvec \frac{1}{q}\sum_{j \in Q^{\kvec,\theta}}  F_\theta \big(\bar x^{j,\kvec,\theta} \big)\\ 
		&\ge\sum_{\theta \in \Theta} \mu_\theta \sum_{\kvec \in \bar \K}  \lambda_\kvec \frac{1}{q}\left[\sum_{j \in [q]}  F_\theta \big(\bar x^{j,\kvec,\theta} \big)-n\right]\\
		&\ge \sum_{\theta \in \Theta} \mu_\theta \sum_{\kvec \in \bar \K}  \lambda_\kvec \frac{1}{q} \left[\sum_{j \in [q]}  F_\theta \big( x^{j,\kvec,\theta} \big)-n-\frac{\iota q}{2}\right]\\
		&= \sum_{\theta \in \Theta} \mu_\theta \sum_{\kvec \in \bar \K} \lambda_\kvec \left[\frac{1}{q}\sum_{j \in [q]}  F_\theta \big( x^{j,\kvec,\theta} \big)-\frac{n}{q}-\iota/2\right] \\
		& \ge \sum_{\theta \in \Theta} \mu_\theta \sum_{\kvec \in \bar \K } \lambda_\kvec \frac{1}{q}\sum_{j \in [q]}  F_\theta \big( x^{j,\kvec,\theta} \big)-\frac{n}{q}-\iota/2\\
		&=APX-\frac{n}{q}-\iota/2.
	\end{align*}
	Hence, the sender's expected utility is at least
	\begin{align*}
	\Pr \left\{ \mathcal{E} \right\} \left( APX - \frac{n}{q}-\iota/2 \right)&\ge (1-p |\bar \K| d q n) \left( APX-\frac{n}{q}-\frac{\iota}{2} \right)\\
	&\ge APX-\frac{n}{q}-\frac{\iota}{2}-p |\bar \K| d q n \\
	&\ge APX-\frac{n}{q}-\iota.
	\end{align*}
	Since we the marginal signaling schemes do not change, all the persuasiveness and IC constraints are satisfied.
	Moreover, for every $\kvec \in \bar \K,\theta \in \Theta$, and $r \in \rec$, it holds
	\[
		\sum_{R \subseteq \rec:r \in R} \phi^\kvec_\theta(R)=\frac{1}{q} \sum_{j \in [q]} \bar x^{j,\kvec,\theta}_r=\frac{1}{q} \sum_{j \in [q]}  x^{j,\kvec,\theta}_r=x^{r,k_r}_\theta,
	\]
	while it is easy to see that $\sum_{R \subseteq \rec} \phi^\kvec_\theta(R)=1$ for every $\kvec \in \bar \K$ and $\theta \in \Theta$.
	This concludes the proof of the lemma.
\end{proof}

\begin{algorithm}[!htp]
	\caption{Algorithm in Lemma \ref{lm:subDirect}}\label{alg:lemma3}
	\textbf{Input:} N. of samples $\sigma > 0$; Solution $x$ to Program~\ref{lp:multiSub}; $\kvec \in \bar \K$; $\theta \in \Theta$
	\begin{algorithmic}[1]
		\For{$r \in \rec$}
		\State {Compute $\tilde e^{j,\kvec,\theta}_r$ estimating $\frac{\partial F_\theta( x^{j,\kvec,\theta} )}{\partial x^{j,\kvec,\theta}_r}$  with $\sigma$ samples}
		\State{Re-label indexes $j \in [q]$ in decreasing order of $\tilde e^{j,\kvec,\theta}_r$}
		\State $j^*\gets \left\lfloor \sum_{j \in [q]} x^{j,\kvec,\theta}_r \right\rfloor$
		\State {$ x^{j^*+1,\kvec,\theta}_r \gets \sum_{j \in [q]} x^{j,\kvec,\theta}_r -j^*$}
		\State{ $Q^{\kvec,\theta,r}\gets \left\{ 1,\dots, j^* \right\}$}
		\For{$j \in Q^{\kvec,\theta,r}$}
		\State{ $x^{j,\kvec,\theta}_r \gets 1$}
		\EndFor
		\For{$j\ge j^*+2$}
		\State	{$x^{j,\kvec,\theta}_r\gets 0$ }
		\EndFor	 	
		\EndFor
		\State {Construct $\phi^\kvec_\theta$ as follows:\\
				$\phi^\kvec_\theta(R) \hspace{-0.05cm}=\hspace{-0.05cm}\frac{1}{q} \hspace{-0.05cm} \left|\left\{j \in\hspace{-0.02cm} [q] : \hspace{-0.05cm}x^{j,\kvec,\theta}_r\hspace{-0.05cm}=\hspace{-0.05cm}1\forall r \in \hspace{-0.05cm}R, x^{j,\kvec,\theta}_r\hspace{-0.05cm}=\hspace{-0.05cm}0\forall r \notin\hspace{-0.05cm} R\right\}\right| \forall R \hspace{-0.05cm}\subseteq\hspace{-0.05cm} \rec$}
		\State{Update $\phi^\kvec_\theta$ to make it consistent with the menus of marginal signaling schemes $\{x^{r,k_r}_\theta\}_{r \in \rec}$}
		\State\Return $\phi^{\kvec}_\theta$
	\end{algorithmic}
\end{algorithm}

Now, we can prove Theorem~\ref{thm:submodular}.

\begin{proof}[proof of Theorem \ref{thm:submodular}]
	 By Lemmas \ref{lm:subUniform} and \ref{lm:subDirect}, we only need to provide an algorithm that approximates the optimal solution of LP~\ref{lp:multiSub}.
	The objective is a linear combination with non-negative coefficients of the multi-linear extension of monotone submodular functions.
	Hence, it is smooth, monotone and submodular.
	Moreover, since we relaxed Constraints~\eqref{lp:multiRelaxed}, the feasible region is a down-monotone polytope\footnote{A polytope $\mathcal{P}\in \mathbb{R}^n_+$ is down-monotone if  $\mathbf{0}\le \mathbf{x} \le  \mathbf{y}$ coordinate-wise and $\mathbf{y} \in \mathcal{P}$ imply $\mathbf{x} \in \mathcal{P}$.} and it is defined by polynomially-many constraints.
	For each $\delta>0$, this problem admits a $\left( 1-\frac{1}{e} \right)OPT-\delta$-approximation in time polynomial in the instance size and $\delta$, see the continuous greedy algorithm in~\citep{CalinescuMaximizing2011}~and~\citep{dughmi2017algorithmic} for a formulation in a similar problem.\footnote{The bound holds only for arbitrary large probability. This reduces the total expected utility by an arbitrary small factor.}
	Finally, we can obtain an arbitrary good approximation choosing an arbitrary large value for $q$ and an arbitrary small value for $\delta$ and $\iota$.
\end{proof}

\section{Conclusions and Future Works}
We proposed to extend the Bayesian persuasion framework with a type reporting step.
We proved that, with a single receiver, the addition of this type reporting step makes the sender's computational problem tractable.
Moreover, we extended the framework to settings with multiple receivers, focusing on the widely-studied case with no inter-agent-externalities and binary actions.
We showed that an optimal sender's strategy can be computed in polynomial time when the sender's utility function is supermodular or anonymous. 
Moreover, when the sender's utility function is submodular, we designed a polynomial-time algorithm that provides a tight $\left(1-\frac{1}{e} \right)$-approximation.

In the future, it would be interesting to study the setting in which the sender has access only to samples from the distribution of the receiver's types.
Another interesting direction is to explore how the type reporting step can be used to provide polynomial-time no-regret algorithms in an online learning framework.

\begin{acks}
	This work has been partially supported by the Italian MIUR PRIN 2017 Project ALGADIMAR ``Algorithms, Games, and Digital Market''.
\end{acks}






\bibliographystyle{ACM-Reference-Format} 
\bibliography{refs}


\clearpage
\onecolumn
\appendix

\section{Proofs Omitted from Section~\ref{sec:results_single}}

\lemmaone*

 \begin{proof}
	We show that, given a menu of signaling schemes $\Phi = \{ \phi^k \}_{k \in K}$ with each $\phi^k$ encoded as a probability distribution $\gamma^k \in \Delta_{\Xi}$, we can construct a new menu of signaling schemes $\bar \Phi = \{ \bar \phi^k \}_{k \in K}$ with each $\bar \phi^k$ encoded as a finite-supported probability distribution $\bar{{\gamma}}^k \in \Delta_{\pset^*}$ and such that the sender's expected utility for $\bar \Phi$ is greater than or equal to that for $\Phi$.
	This immediately proves the statement.

	In order to do so, we split the posteriors in $\pset$ into the sets $\hat \pset^{\avec}$ for $\avec \in \bigtimes_{k \in K} A$.
	Notice that $\pset=\bigcup_{\avec \in \bigtimes_{k \in K} A } \hat \pset^{\avec}$.
	Then, we replace the distributions $\gamma^k$ with other probability distributions supported on sets $V(\pset^\avec)\subseteq \pset^*$.
	For every action profile $\avec \in \bigtimes_{k \in K} A$ and type $k \in K$, we let $\pvec^{k,\avec} \coloneqq \mathbb{E}_{\pvec \sim \gamma^k} \left[ \pvec  \mid \pvec \in \hat \pset^\avec \right]$.
	Since $\hat \pset^\avec \subseteq \pset^\avec$ and $\pset^\avec$ is a bounded convex polytope, by Carathèodory theorem there exists a probability distribution $ {\gamma}^{k,\avec}\in \Delta_{\pset^*}$ such that its support is a subset of the set of vertices $V(\pset^{\avec})$ and it holds $\mathbb{E}_{\pvec\sim {\gamma}^{k,\avec}} \left[ \pvec \right]= \pvec^{k,\avec}$. 
	Then, let us define the probability distributions $\bar{{\gamma}}^k \in \Delta_{\pset^*}$ for $k \in K$ so that, for every posterior $\pvec \in \pset^*$, it holds
	\[
		\bar \gamma^k_\pvec = \sum_{\avec \in \bigtimes_{k \in K} A} \gamma^{k,\avec}_\pvec \, \text{Pr}_{\pvec' \sim \gamma^k} \left\{ \pvec' \in \hat \pset^\avec \right\}.
	\]
	
	Next, we show that the distributions $\bar{{\gamma}}^k \in \Delta_{\pset^*}$ for $k \in K$ defined above constitute a feasible solution to LP~\ref{lp:GenFinite} and the sender's expected utility in the resulting menu of signaling schemes $\bar \Phi$ is at least as large as the sender's expected utility for the menu of signaling schemes $\Phi$.
	%
	First, 
	let us notice that,
	for every $\avec \in \bigtimes_{k \in K} A$, $k \in K$, and $k' \in K$, it holds
	\begin{equation}\label{eq:distToExp}
		\sum_{\pvec \in V(\pset^\avec)} \gamma^{k',\avec}_\pvec \left[ \sum_{\theta \in \Theta} \p_\theta u^k_\theta \left( b^k_\pvec \right) \right] = \sum_{\pvec \in V(\pset^\avec)} \gamma^{k',\avec}_\pvec \left[ \sum_{\theta \in \Theta} \p_\theta u^k_\theta (a_k) \right] = 
		\sum_{\theta \in \Theta} \p^{k',\avec}_\theta u^k_\theta(a_k) =\mathbb{E}_{\pvec\sim \gamma^{k'}} \left[ \sum_{\theta \in \Theta} \p_\theta u^k_\theta(a_k)  \mid \pvec \in \hat \pset^\avec \right],
	\end{equation}
	where the second equality comes from the fact that action $a_k$ is the best response of the receiver of type $k$ in each posterior $\pvec \in\pset^\avec$.
	Similarly, we can prove that, for every $\avec \in \bigtimes_{k \in K} A$ and $k \in K$, it holds
	\begin{equation}\label{eq:distToExp_send}
		\sum_{\pvec \in V(\pset^\avec)} \gamma^{k,\avec}_{\pvec}  \left[ \sum_{\theta \in \Theta} \p_\theta u^\mathsf{s}_\theta \left( b^k_\pvec \right)\right] \ge \sum_{\pvec \in V(\pset^\avec)} \gamma^{k,\avec}_{\pvec} \left[ \sum_{\theta \in \Theta} \p_\theta u^\mathsf{s}_\theta(a_k)  \right] = 
		\sum_{\theta \in \Theta} \p^{k,\avec}_\theta u^\mathsf{s}_\theta(a_k)   = \mathbb{E}_{\pvec\sim \gamma^k} \left[ \sum_{\theta \in \Theta} \p_\theta u^\mathsf{s}_\theta(a_k)  \mid \pvec \in \hat \pset^\avec \right]. 
	\end{equation}
	Then, we can show that the IC constraints, namely Constraints~\eqref{lp:GenFinite1}, are satisfied.
	Formally, for every $k \in K$ and $k' \in K : k \neq k'$, we have:
	\begin{align*}
		\sum_{\pvec \in \pset^*} \bar \gamma^{k'}_{\pvec} \sum_{\theta \in \Theta} \p_\theta u_\theta^k\left( b^k_\pvec \right) & = \sum_{\pvec \in \pset^*} \sum_{\avec \in \bigtimes_{k \in K} A} \gamma^{k',\avec}_\pvec \, \text{Pr}_{\pvec'\sim \gamma^{k'}} \left\{ \pvec' \in \hat \pset^\avec \right\} \sum_{\theta \in \Theta} \p_\theta u^k_\theta \left( b^k_\pvec \right) \\
		& = \sum_{\avec \in \bigtimes_{k \in K} A} \text{Pr}_{\pvec'\sim \gamma^{k'}} \left\{ \pvec' \in \hat \pset^\avec \right\} \sum_{\pvec \in V(\pset^\avec)} \gamma^{k',\avec}_\pvec \left[ \sum_{\theta \in \Theta} \p_\theta u^k_\theta \left( b^k_\pvec \right) \right] \\
		&= \sum_{\avec \in \bigtimes_{k \in K} A} \text{Pr}_{\pvec'\sim \gamma^k} \left\{  \pvec' \in \hat \pset^\avec \right\} \mathbb{E}_{\pvec\sim \gamma^{k'}} \left[ \sum_{\theta \in \Theta} \p_\theta u^k_\theta(a_k)  \mid \pvec \in \hat \pset^\avec \right]  \\
		& = \mathbb{E}_{\pvec\sim \gamma^{k'}} \left[ \sum_{\theta \in \Theta} \p_\theta u^k_\theta \left( b^k_\pvec \right) \right],
	\end{align*}
	where the second equality comes from the fact that $\gamma^{k',\avec}_\pvec$ is non-zero only for posteriors $\pvec\in V(\pset^\avec)$ and in third equality we use Equation~\eqref{eq:distToExp}.
	Hence, for every $k \in K$ and $k' \in K : k \neq k'$, we have
	\[
		\sum_{\pvec \in \pset^*} \bar \gamma^{k}_{\pvec}  \sum_{\theta \in  \Theta} \p_\theta u_\theta^k \left( b^k_\pvec \right) = \mathbb{E}_{\pvec\sim \gamma^k} \left[ \sum_{\theta \in \Theta} \p_\theta u^k \left( b^k_\pvec \right) \right] \geq \mathbb{E}_{\pvec\sim \gamma^{k'}} \left[ \sum_{\theta \in \Theta} \p_\theta u^k \left( b^k_\pvec \right) \right] =\sum_{\pvec \in \pset^*} \gamma^{k'}_{\pvec} \sum_{\theta \in \Theta} \p_\theta u_\theta^k \left( b^k_{\pvec} \right),
	\]
	where the inequality comes from the w.l.o.g. assumption that the menu $\Phi$ is IC.
	This proves that Constraints~\eqref{lp:GenFinite1} hold.
	Similarly, we can prove that the sender's expected utility does \emph{not} decrease when using $\bar \Phi$ rather than $\Phi$.
	Formally,
	\begin{align*}	
		\sum_{k \in K} \lambda_k \sum_{\pvec \in \pset^*} \bar \gamma^k_{\pvec} \sum_{\theta \in \Theta} \p_\theta u^\mathsf{s}_\theta \left( b^k_\pvec \right) & =\sum_{k \in K} \lambda_k \sum_{\pvec \in \pset^*} \sum_{\avec \in \bigtimes_{k \in K} A} \gamma^{k',\avec}_\pvec \,\text{Pr}_{\pvec'\sim \gamma^{k}} \left\{  \pvec' \in \hat \pset^\avec \right\} \sum_{\theta \in \Theta} \p_\theta u^\mathsf{s}_\theta \left( b^k_\pvec \right) \\
		& =\sum_{k \in K} \lambda_k  \sum_{\avec \in \bigtimes_{k \in K} A} \text{Pr}_{\pvec'\sim \gamma^{k}}\left\{  \pvec' \in \hat \pset^a  \right\}\sum_{\pvec \in V(\pset^{\avec})}  \gamma^{k',\avec}_\pvec \sum_{\theta \in \Theta} \p_\theta u^\mathsf{s}_\theta \left( b^k_\pvec \right) \\
		&  \ge \sum_{k \in K} \lambda_k \sum_{\avec \in \bigtimes_{k \in K} A} \text{Pr}_{\pvec'\sim \gamma^{k}} \left\{ \pvec' \in \hat \pset^\avec \right\} \mathbb{E}_{\pvec\sim \gamma^k} \left[ \sum_{\theta \in \Theta} \p_\theta u^\mathsf{s}_\theta(a_k)  \mid \pvec \in \hat \pset^\avec \right]  \\
		& = \sum_{k \in K} \lambda_k \mathbb{E}_{\pvec\sim \gamma^k}   \left[ \sum_{\theta \in \Theta} \p_\theta u^\mathsf{s}_\theta \left(b^k_{\pvec} \right) \right],
	\end{align*}
	where the inequality comes from Equation~\eqref{eq:distToExp_send}.
	Moreover, Constraints~\eqref{lp:GenFinite2} are satisfied, since
	\begin{align*}
		\sum_{\pvec \in \pset^*} \bar \gamma^k_\pvec \p_\theta & = \sum_{\pvec \in \pset^*}\sum_{\avec \in \bigtimes_{k \in K} A} \gamma^{k,\avec}_\pvec \, \text{Pr}_{\pvec'\sim \gamma^k} \left\{ \pvec' \in \hat \pset^\avec \right\} \p_\theta  \\
		& = \sum_{\avec \in \bigtimes_{k \in K} A} \text{Pr}_{\pvec'\sim \gamma^k} \left\{ \pvec' \in \hat \pset^\avec \right\} \sum_{\pvec \in V(\pset^\avec)}  \gamma^{k,\avec}_\pvec  \p_\theta \\ 
		& = \sum_{\avec \in \bigtimes_{k \in K} A} \text{Pr}_{\pvec'\sim \gamma^k} \left\{ \pvec' \in \hat \pset^\avec \right\} \mathbb{E}_{\pvec \sim \gamma^k} \left[  \p_\theta \mid \pvec \in \hat \pset^\avec \right]\\
		& =\mathbb{E}_{\pvec \sim \gamma^k} \left[ \p_\theta \right]=\mu_\theta.
	\end{align*}
	Finally, it is easy to see that the $\bar{\boldsymbol{\gamma}}^k$ are valid probability distributions.
	Indeed, for every $k \in K$, it holds
	\[
		\sum_{\pvec \in \pset^*} \bar \gamma^k_\pvec =\sum_{\avec \in \bigtimes_{k \in K} A} \text{Pr}_{\pvec'\sim \gamma^k} \left\{  \pvec' \in \hat \pset^\avec \right\} \sum_{\pvec \in \pset^*} \gamma^{k,\avec}_{\pvec} =  \sum_{\avec \in \bigtimes_{k \in K} A} \text{Pr}_{\pvec'\sim \gamma^k} \left\{  \pvec' \in \hat \pset^\avec \right\}=1.
	\]
	This concludes the proof.
\end{proof}

\theoremSingle*

\begin{proof}
	Since LP~\ref{lp:direct} has polynomially-many variables and constraints, an optimal menu of direct and persuasive signaling schemes can be computed in polynomial time by solving the LP.
	Thus, we only need to show that, in any single-receiver instance, there always exists an optimal menu of direct and persuasive signaling schemes.
	In particular, we show that, given an optimal solution $\{\boldsymbol{\gamma}^k\}_{k \in K}$ to LP~\ref{lp:GenFinite}, there exists a solution to LP~\ref{lp:direct} with the same value.
	The menu $\Phi = \{\phi^k\}_{k \in K}$ of signaling schemes defined by the solution to LP~\ref{lp:direct} is the desired optimal menu of direct and persuasive signaling schemes.
	We define the solution to LP~\ref{lp:direct} as follows.
	For every $k \in K$, $a \in A$, and $\theta \in \Theta$, we let $\phi^k_\theta(a) = \frac{\sum_{\pvec \in \hat \pset^{k,a} \cap \pset^*} \gamma^k_\pvec \p_\theta}{\mu_\theta}$.
	First, we prove that the two solutions have the same objective value.
	Formally,
	\[
		\sum_{k \in K} \lambda_k \sum_{\theta \in \Theta} \mu_\theta \sum_{a \in A} \phi^k_\theta(a) u_\theta^\mathsf{s}(a) =  \sum_{k \in K} \lambda_k \sum_{\theta \in \Theta} \sum_{a \in A}  \sum_{\pvec \in \hat \pset^{k,a} \cap \pset^*} \gamma^k_\pvec \p_\theta u_\theta^\mathsf{s}(a) =  \sum_{k \in K} \lambda_k \sum_{\pvec \in \pset^*} \gamma^k_{\pvec} \sum_{\theta \in \Theta} \p_\theta u^\mathsf{s}_\theta \left( b^k_\pvec \right),
	\]
	where the last equality follows from the fact that $b^k_\pvec=a$ for all the posteriors in $\pvec \in \hat \pset^{k,a}$.
	Thus, we are left to check that the solution is feasible.
	Recall that Constraints~\eqref{lp:direct1}~and~\eqref{lp:direct2} are equivalent to the constraints in Equation~\eqref{eq:ICDirect}.
	The latter are satisfied since, for every $k\neq k' \in K$, it holds
	\begin{align*}
		\sum_{a \in A}  \sum_{\theta \in \Theta} \mu_\theta \phi_\theta^k(a) u^{k}_\theta (a) & = \sum_{a \in A}  \sum_{\theta \in \Theta} \sum_{\pvec \in \hat \pset^{k,a}\cap \pset^*} \gamma^k_\pvec \p_\theta u^{k}_\theta (a) \\
		& =\sum_{\pvec \in \pset^*} \gamma^{k}_{\pvec} \sum_{\theta \in \Theta} \p_\theta u_\theta^k \left( b^k_\pvec \right) \\
		& \ge  \sum_{\pvec \in \pset^*} \gamma^{k'}_{\pvec} \sum_{\theta \in \Theta} \p_\theta u_\theta^k \left( b^k_{\pvec} \right) \\
		& =\sum_{a \in A} \sum_{\pvec \in \hat \pset^{k,a} \cap \pset^*} \gamma^{k'}_{\pvec} \sum_{\theta\in \Theta} \p_\theta u_\theta^k \left( b^k_\pvec \right) \\
		& = \sum_{a \in A} \sum_{\pvec \in \hat \pset^{k,a}} \gamma^{k'}_{\pvec} \max_{a' \in A}\sum_{\theta \in \Theta} \p_\theta u_\theta^k(a') \\
		& \ge \sum_{a \in A} \max_{a' \in A} \sum_{\pvec \in \hat \pset^{k,a}\cap \pset^*} \gamma^{k'}_{\pvec} \sum_{\theta \in \Theta} \p_\theta u_\theta^k(a') \\
		& =\sum_{a \in A} \max_{a' \in A} \sum_{\theta \in \Theta} \mu_\theta \phi^{k'}_\theta(a) u^k_\theta(a').
	\end{align*}
	Moreover, each signaling scheme $\phi^k$ is persuasive, since, for every $k \in K$, and $a\neq a' \in A$, it holds
	\begin{align*}
	\sum_{\theta \in \Theta} \mu_\theta \phi^k_\theta(a) u^{k}_\theta (a) &= \sum_{\theta \in \Theta} \sum_{\pvec \in \hat \pset^{k,a}\cap \pset^*} \gamma^k_\pvec \p_\theta u^{k}_\theta (a) \\
	& =\sum_{\theta \in \Theta} \sum_{\pvec \in \hat \pset^{k,a}\cap \pset^*} \gamma^k_\pvec \p_\theta u^{k}_\theta \left( b^k_\pvec \right)\\ 
	& \ge \sum_{\theta \in \Theta} \sum_{\pvec \in \hat \pset^{k,a}\cap \pset^*} \gamma^k_\pvec \p_\theta u^{k}_\theta (a') \\
	& =\sum_{\theta \in \Theta} \mu_\theta \phi^k_\theta(a) u^{k}_\theta (a'),
	\end{align*}
	and it is well defined since, for every $k \in K$ and $\theta \in \Theta$, it holds
	\[ 
		\sum_{a \in A} \phi^k_\theta(a) = \sum_{a \in A} \sum_{\pvec \in \hat \pset^{k,a}\cap \pset^*} \frac{\gamma^k_\pvec \p_\theta}{\mu_\theta} = \sum_{\pvec\in \pset^*} \frac{\gamma^k_\pvec \p_\theta}{\mu_\theta} =\frac{\mu_\theta}{\mu_\theta}=1.
	\]
	This concludes the proof.
	
	
\end{proof}

\section{Proofs Omitted from Section~\ref{sec:results_multiSup}}

\theoremDirectMulti*

\begin{proof}
	The key insight of the proof is that, in a multi-receiver instance, the sender's expected utility only depends on the marginal probabilities with which the receivers play actions $a_1$ and $a_0$ given each state of nature.
	In order to see that, observe that, once the marginal probabilities $ x^{r,k}_\theta$ are fixed, an optimal (general) signaling scheme $\phi^\kvec$ can be computed by solving LP~\ref{lp:multi} with Constraints~\eqref{lp:multi1}~and~\eqref{lp:multi7} only.
	Thus, we only need to show that, given a receiver $r$ and an arbitrary menu of marginal signaling schemes $\{\phi^{r,k}\}_{k \in \K_r}$, we can always build a menu of \emph{direct} marginal signaling scheme $\{\bar \phi^{r,k}\}_{k \in \K_r}$ such that $\bar \phi^{r,k}_\theta(a_1)\ge \phi^{r,k}_\theta(a_1)$  for each $\theta \in \Theta$ and $k \in \K_r$.  By the monotonicity assumption on $f_\theta$ the optimal sender's strategy with marginal signaling scheme $\{\bar \phi^{r,k}\}_{r \in \rec,k \in \K_r}$ has an utility greater or equal to the one with $\{\phi^{r,k}\}_{r \in \rec, k \in \K_r}$.

	This can be proved by following steps similar to those of Lemma~\ref{lm:finite} and Theorem~\ref{thm:single} for each menu of marginal signaling schemes.
	%
	%
	In particular,
	let $r$ be a receiver and $\Phi^r=\{\phi^{r,k}\}_{k \in \K_r}$ be a menu of marginal signaling schemes that induces probability distribution $\gamma^{r,k}$ over the posteriors when the reported type is $k$.
	Notice that the probability that the receiver of type $k \in \K_r$ plays an action $a \in \A_r$ is given by $\sum_{\xi \in \hat \Xi^{k,a_1}}  \gamma^{r,k}_\pvec \xi_\theta$.\footnote{For the ease of presentation, we assume that $\gamma^{r,k}$ has finite support. Formally, we should replace $\sum_{\xi \in \hat \Xi^{k,a}}  \gamma^{r,k}_\pvec \xi_\theta$ with  $Pr_{\pvec \sim \gamma^{r,k}}\left\{\pvec \in \hat \Xi^{k,a} \right\} \mathbb{E} \left[ \xi_\theta \mid \pvec \in \hat \Xi^{k,a} \right]$.} 
	 We can obtain a menu of probability distribution $\{\bar \gamma^{r,k}\}_{k \in \K_r}$ over $\Xi^*$ such that the probability that the receiver plays $a_1$ increases for each $k$ and $\theta$, \emph{i.e.}, $\sum_{\xi \in \hat \Xi^{k,a_1}\cap \Xi^*} \bar \gamma^{r,k}_\pvec \xi_\theta \ge \sum_{\xi \in \hat \Xi^{k,a_1}}  \gamma^{r,k}_\pvec \xi_\theta$. To see this, it is sufficient to follow the proof of Lemma \ref{lm:finite} and notice that the receiver always breaks ties in favor of $a_1$ by the monotonicity assumption on $f_\theta$.
	Finally, setting $\bar \phi^{r,k}_\theta(a)=\frac{\sum_{\pvec \in \hat \pset^{k,a} \cap \pset^*} \bar \gamma^k_\pvec \p_\theta}{\mu_\theta}$ for each $k \in \K_r$, $\theta \in \Theta$ and $a \in \A_r$, we obtain a menu of signaling schemes such that
	\[\bar \phi^{r,k}_\theta(a_1)= \frac{\sum_{\pvec \in \hat \pset^{k,a_1} \cap \pset^*} \bar \gamma^k_\pvec \p_\theta}{\mu_\theta} \ge \frac{  \sum_{\xi \in \hat \Xi^{k,a_1}}  \gamma^{r,k}_\pvec \xi_\theta}{\mu_\theta}  \quad \forall \theta \in \Theta, \forall k \in \K_r\]
	
	To conclude, following the proof of Theorem \ref{thm:single} we can show that the menu of marginal signaling schemes $\{\bar \phi^{r,k}\}_{k \in \K_r}$ is IC and persuasive.
\end{proof}

\theoremMulti*

\begin{proof}
	Since \ref{lp:multi} has an exponential number of constraints, we work on the dual formulation.

	\begin{subequations}
		\begin{align}
		\min_{q,t\le0,z\le0,y\le0,p} & - \sum_{r \in \rec,k\neq k' \in \K_r} \sum_\theta\mu_\theta u^{r,k}_\theta(a_0) t_{r,k,k'} + \sum_{r \in \rec,a \in \A_r,k\neq k' \in \K_r} \sum_{\theta \in \Theta} \mu_\theta u^{r,k}_\theta(a) z_{a_0,r,a,k,k'} \\
		&\hspace{6.5cm}- \sum_{r \in \rec,k \in \K_r} \sum_\theta \mu_\theta [u^{r,k}_\theta(a_0)-u^{r,k}_\theta(a_1)] y_{a_0,r,k}+\sum_{\kvec \in \bar \K,\theta \in \Theta} p_{\kvec,\theta}\\
		&\hspace{-1.5cm}- \sum_{\kvec' \in \bar \K:k'_r=k}q_{\kvec',r,\theta} +\sum_{k'\neq k} t_{r,k,k'} [ \mu_\theta u^{r,k}_\theta(a_1) - \mu_\theta u^{r,k}_\theta(a_0) ]-   \sum_{a \in \A_r,k' \neq k} \mu_\theta u^{k'}_\theta(a) z_{a_1,r,a',k',k} +  \sum_{a \in \A_r,k' \neq k} \mu_\theta u^{k'}_\theta(a) z_{a_0,r,a',k',k}\\
		&\hspace{2.3cm}+\mu_\theta[u^{r,k}_\theta(a_1)-u^{r,k}_\theta(a_0)]y_{a_1,r,k}-\mu_\theta[u^{r,k}_\theta(a_0)-u^{r,k}_\theta(a_0)]y_{a_0,r,k} \ge 0 \hspace{0.5cm} \forall r \in \rec ,k \in \K_r, \forall \theta \in \Theta \label{lp:multidual1}\\
		& \hspace{-1.1cm} - t_{r,k,k'}+ \sum_{a' \in \A_r}z_{a_1,r,a',k,k'} \ge 0 \hspace{9cm}\forall r \in \rec,k\neq k' \in \K_r\\
		& \hspace{-1.1cm} - t_{r,k,k'}+ \sum_{a' \in \A_r}z_{a_0,r,a',k,k'} \ge 0 \hspace{9cm} \forall r \in \rec,k\neq k' \in \K_r \\
		& \hspace{-1.1cm}\sum_{ r \in R} q_{\kvec,r,\theta} + p_{\kvec,\theta} \ge \mu_\theta \lambda_\kvec  f_\theta(R) \hspace{8.5cm}\forall \kvec \in \bar \K, \forall \theta \in \Theta, \forall R \subseteq \rec \label{lp:multidual2} 
		\end{align}
	\end{subequations}
	
	where variables $q_{\kvec,r,\theta}$ relative to constraints \eqref{lp:multi1}, $t_{r,k,k'}$ to \eqref{lp:multi2}, $z_{a,r,a',k,k'}$ to \eqref{lp:multi3} and \eqref{lp:multi4}, $y_{a,r,k}$ to \eqref{lp:multi5} and \eqref{lp:multi6}, $p_{\kvec,\theta}$ to \eqref{lp:multi7}.
	
	%
	To solve the problem with the ellipsoid method it is sufficient to design a polynomial time separation oracle.
	We focus on the separation oracle that returns a violated constraint.
	Given an assignment to the variables, there are a polynomial number of constraint \ref{lp:multidual1} (with polynomially many variables) and we can check if a constraint is violated in polynomial time.
	Moreover, for each $\bar \theta$, $\bar \kvec$, we can find if there exists a violated constraint $(\bar \kvec,\bar \theta,R)$. We can use the oracle to find $max_{R \subseteq \rec}  \lambda_k f_\theta(R)- \sum_{r:r \in R} q_{\bar\kvec,r,\bar\theta} $. If it is greater than $p_{\bar \kvec,\bar \theta}$, we can return a violated constraint, while if it is smaller or equal to $p_{\bar \kvec, \bar \theta}$, all the constraints $\{(\bar \kvec,\bar \theta,R)\}_{R \subseteq \rec}$ are satisfied.
\end{proof}

\corollaryOracle*
\begin{proof}
	By Theorem \ref{thm:oracle}, we only need to design a polynomial time oracle. Since the sum of a supermodular and a modular function is supermodular, and unconstrained supermodular maximization can be solved in polynomial time, an oracle can be designed in polynomial time for supermodular functions.
	For anonymous functions we can construct a polynomial time oracle as follows. We can enumerate over all $n\in \{ 0,\dots,|\rec|\}$. Once we fix the size of the set to $n$, the optimal set includes the $n$ receiver with higher values of weights $w$.
\end{proof}

\section{Proofs Omitted from Section~\ref{sec:results_multiSub}}

\lemmaSubUniform*

\begin{proof}
	Given an optimal solution $(\phi,x)$ to LP~\ref{lp:multi}, we show how to build a solution to LP~\ref{lp:multiSub} with almost the same value.
	Since LP \eqref{lp:multi} has $\beta$ constraints, there exists an optimal solution $(\phi,x)$ to LP \ref{lp:multi} with support at most $\beta$.
	We construct a solution to Program~\ref{lp:multiSub} with the same values of variables $x_\theta^{r,k}$ (representing marginalsignalign schemes).
	Then, we show how to obtain a $q$-uniform distribution for every $\kvec \in \K$ and $\theta \in \Theta$.
	Fix $\kvec \in K$ and $\theta\in \Theta$.
	Let $G^{\kvec,\theta} \subseteq 2^\rec$ be the subsets of $R \subseteq \rec$ that are in the support of distribution $\phi^{\kvec}_\theta$, namely $\phi^{\kvec}_\theta(R)>0$.
	Notice that $| G^{\kvec,\theta} | \le \beta$, since the solution has support at most $\beta$.
	For every $R \in G^{\kvec,\theta}$, we define $N^{\kvec,\theta}(R)$ as the greatest integer $i$ such that $\phi^{\kvec}_\theta(R) \ge \frac{i}{q}$.
	Finally, for every $R\in G^{\kvec,\theta}$, we choose  $N^{\kvec,\theta}(R)$ indexes $j \in [q]$ (with each index being selected at most one time) for which we set $x^{j,\kvec,\theta}_r=1$ for every $r \in R$, and $x^{j,\kvec,\theta}_r=0$ for every $r \notin R$.
	%
	Since $\sum_{R \in G^{\kvec,\theta}} |N^{\kvec,\theta}(R)| \le  \sum_{R \in G^{\kvec,\theta}} q\phi^{\kvec}_\theta(R)=q$, we have defined values for at most $q$ indexes.
	For all the remaining indexes $j \in [q]$, we set $x^{j,\kvec,\theta}_r=0$ for $r \in \rec$.

	It is easy to see that the defined solution is feasible since, for every $\kvec \in \K$, $\theta \in \Theta$, and $r \in \rec$, it holds that
	\[\sum_{j\in [q]} \frac{1}{q} x^{j,\kvec,\theta}_{r} = \frac{1}{q} \sum_{R \in G^{\kvec,\theta}:r \in R} \hspace{-2mm}N^{\kvec,\theta}(R)   \le \sum_{R \in G^{\kvec,\theta}:r \in R} \hspace{-2mm} \phi^{\kvec}_\theta(R)= x^{r,k_r}_\theta.\]
	%
	Moreover, for every $\kvec \in \K$ and $\theta \in \Theta$, the sender's expected utility in a state of nature $\theta \in \Theta$ is at least  
	\begin{align*} 
	\frac{1}{q} \sum_{j \in [q]} F_\theta \left( x^{j,\kvec,\theta} \right) &= \frac{1}{q}\sum_{R \in G^{\kvec,\theta}} N^{\kvec,\theta}(R) f_\theta(R) \\
	&\ge \sum_{R \in G^{\kvec,\theta}} \left( \phi^\kvec_\theta(R)f_\theta(R)-\frac{1}{q} \right) \\
	&\ge\sum_{R \in G^{\kvec,\theta}} \phi^\kvec_\theta(R) f_\theta(R)- \frac{\beta}{q} \\
	&\ge \sum_{R \subseteq \rec} \phi^\kvec_\theta(R) f_\theta(R)- \epsilon,
	\end{align*}
	where the equality follows from $x^{j,\kvec,\theta}_r\in \{0,1\}$, the first inequality by $\frac{1}{q}N^{\kvec,\theta}(R) \ge \phi^\kvec_\theta(R)-\frac{1}{q}$, the second one from the fact that $|G^{\kvec,\theta}| \le \beta$, and the last one by the definitions of $q$ and $G^{\kvec,\theta}$.
	Hence, the sender's expected utility is at least 
	\begin{align*}
	\sum_{\theta \in \Theta} \mu_\theta \sum_{\kvec \in \K} \lambda_\kvec \frac{1}{q} & \sum_{j \in [q]}  F_\theta( x^{j,\kvec,\theta}) \\
	&\ge \sum_{\theta \in \Theta} \mu_\theta \sum_{\kvec \in \K} \lambda_\kvec \left(\sum_{R \subseteq \rec} \phi^k_\theta(R) f_\theta(R)-\epsilon\right)\\
	&=\sum_{\theta \in \Theta} \mu_\theta \sum_{\kvec \in \K} \lambda_\kvec \sum_{R \subseteq \rec} \phi^k_\theta(R) f_\theta(R)-\epsilon
	\end{align*}
	This concludes the proof.
\end{proof}

\end{document}